\title{(Generalized) Post Correspondence Problem and semi-Thue systems}
\author{Fran\c{c}ois Nicolas\thanks{E-mail address: \texttt{nicolas@cs.helsinki.fi}}}
\newtheorem{definition}{Definition}
\newtheorem{theorem}{Theorem}
\newtheorem{proposition}{Proposition}
\newtheorem{corollary}{Corollary}
\newtheorem{lemma}{Lemma}
\newtheorem{claim}{Claim}
\newtheorem{fact}{Fact}
\newtheorem{remark}{Remark}
\newtheorem{cex}{Counterexample}
\newcommand{\pbPCP}{\textup{PCP}}
\newcommand{\pbkPCP}[1]{\pbPCP$(#1)$}
\newcommand{\pbGPCP}{\textup{GPCP}}
\newcommand{\pbkGPCP}[1]{\pbGPCP$(#1)$}
\newcommand{\pbaccess}{\textsc{Accessibility}}
\newcommand{\pbkaccess}[1]{\textsc{Accessibility}$(#1)$}
\newcommand{\delimf}{f} 
\newcommand{\seg}[2]{[ #1, #2 ]}
\newcommand{\calak}{\mathcal{C}_k}
\newcommand{\ze}{\mathtt{0}}
\newcommand{\on}{\mathtt{1}}
\newcommand{\tw}{\mathtt{d}}
\newcommand{\tta}{\mathtt{a}}
\newcommand{\ttb}{\mathtt{b}}
\newcommand{\ab}{\left\{ \tta, \ttb \right\}}
\newcommand{\ttc}{\mathtt{c}}
\newcommand{\tte}{\mathtt{e}}
\newcommand{\zeon}{\left\{ \ze, \on \right\}}
\newcommand{\hatsigma}{\widehat{\sigma}}
\newcommand{\hatSigma}{\widehat{\Sigma}}
\newcommand{\hattau}{\widehat{\tau}}
\newcommand{\hatalpha}{\widehat{\mu}}
\newcommand{\hatT}{\widehat{T}}
\newcommand{\Thuearrow}{\pmb{\longmapsto}}
\newcommand{\deriv}[1]
{\mathrel{{\Thuearrow}_{#1}}}
\newcommand{\Deriv}[1]
{\mathrel{\underset{#1}{\Thuearrow}}}
\newcommand{\Derivstar}[1]
{\mathrel{\underset{#1}{\overset{\star}{\Thuearrow}}}}
\newcommand{\derivstar}[1]{\mathrel{{\overset{\star}{\Thuearrow}}_{#1}}}
\newcommand{\nderivstar}[1]{ \mathrel{\overset{\star}{\slashed{\Thuearrow}}_{#1}}}
\newcommand{\mv}{\varepsilon}
\newcommand{\defeq}{\mathrel{\mathop:}=}
\newcommand{\lgr}[1]{\left| #1 \right|}
\newcommand{\nocc}[2]{\left| #2 \right|_{#1}}
\newcommand{\itemonlyif}{\item(\emph{only if}).}
\newcommand{\itemif}{\item(\emph{if}).}
\newcommand{\sone}{s}
\newcommand{\tone}{s'}
\newcommand{\stwo}{t}
\newcommand{\ttwo}{t'}
\begin{document}

\maketitle 
\begin{abstract}
Let \pbkPCP{k} denote the following restriction of the well-known Post Correspondence Problem \cite{Post46PCP}: 
given alphabet $\Sigma$ of cardinality $k$ and two morphisms $\sigma$, $\tau : \Sigma^\star \to \zeon^\star$,
decide whether there exists $w \in \Sigma^+$ such that $ \sigma(w) = \tau(w)$.
Let \pbkaccess{k} denote the following restriction of the accessibility problem for semi-Thue systems:
given a $k$-rule semi-Thue system $T$ and two words $u$ and $v$, decide whether $v$ is derivable from $u$ modulo $T$.
In 1980, 
Claus showed that  if \pbkaccess{k} is undecidable  then \pbkPCP{k + 4} is also undecidable \cite{Claus80}.
The aim of the paper is to present a clean, detailed proof of the statement.

We proceed in two steps, using the Generalized Post Correspondence Problem  \cite{EhrenfeuchtR81} as an auxiliary.
Let \pbkGPCP{k} denote the following restriction of \pbGPCP:
given an alphabet $\Sigma$ of cardinality $k$, 
two morphisms $\sigma$, $\tau : \Sigma^\star \to \zeon^\star$ 
and four words $\sone$, $\stwo$, $\tone$, $\ttwo \in \zeon^\star$, 
decide whether there exists $w \in \Sigma^\star$ such that $\sone \sigma(w) \stwo = \tone \tau(w) \ttwo$.
First, we prove that if \pbkaccess{k} is undecidable then \pbkGPCP{k + 2}  is  also undecidable.
Then, we prove that if \pbkGPCP{k} is undecidable  then  \pbkPCP{k + 2} is also undecidable.
(The latter result can also be found in \cite{HarjuKHandbook}.)

To date, the sharpest undecidability bounds for both \pbPCP{} and \pbGPCP{} have been deduced from Claus's result: 
since Matiyasevich and S{\'e}nizergues showed that \pbkaccess{3} is undecidable \cite{MatiyasevichS05},
\pbkGPCP{5}  and  \pbkPCP{7} are  undecidable.
\end{abstract}

\section{Introduction}

A \emph{word} is a finite sequence of \emph{letters}.
The \emph{empty word} is denoted by $\mv$.
For every word $w$, the \emph{length} of $w$ is denoted by $\lgr{w}$.
A set of words is called a \emph{language}.
Word concatenation is denoted multiplicatively.
For every language $L$, 
$L^+$ denotes the closure of $L$ under concatenation,
and $L^\star$ denotes the language $L^+ \cup \{ \mv \}$.
An \emph{alphabet} is a finite set of {letters}.
For every alphabet $\Sigma$, $\Sigma^+$ equals the set of all non-empty words over $\Sigma$,
and  $\Sigma^\star$ equals the set of all words over $\Sigma$ including the empty word.

Let $x$ and $y$ be two words.
We say that $x$ is a \emph{prefix} (resp.~\emph{suffix}) of $y$ if there exists a word $z$ such that $x z = y$ (resp.~$zx = y$).
A  prefix (resp.~{suffix}) of $y$ is called  \emph{proper} if it is distinct from $y$.
We say that $x$ \emph{occurs} in $y$ if there exists a word $z$ such that $z x$ is a prefix of $y$.
The number of occurrences of $x$ in $y$ is denoted by $\nocc{x}{y}$: $\nocc{x}{y}$ equals the number of words $z$ such that $z x$ is a prefix of $y$.

\subsection{The (Generalized) Post Correspondence Problem}

Let $\Sigma$ and $\Delta$ be alphabets.
A function $\sigma: \Sigma^\star \to \Delta^\star$ is called a \emph{morphism} if $\sigma(x y) = \sigma(x) \sigma(y)$ for every $x$, $y \in \Sigma^\star$.
Note that any morphism maps the empty word to itself.
Moreover, for every function $\sigma_1 : \Sigma \to \Delta^\star$, there exists exactly one 
morphism $\sigma: \Sigma^\star \to \Delta^\star$ such that $\sigma(a) = \sigma_1(a)$ for every $a \in \Sigma$.
Hence, although the set of all functions from $\Sigma^\star$ to $\Delta^\star$ has the power of the continuum,  
the restriction of $\sigma$ to $\Sigma$ provides a finite encoding of $\sigma$ for every morphism $\sigma : \Sigma^\star \to \Delta^\star$. 
From now on such encodings are considered as canonical.

The well-known  Post Correspondence Problem  (\pbPCP) \cite{Post46PCP} can be stated as follows: 
given an alphabet $\Sigma$ and two morphisms $\sigma$, $\tau : \Sigma^\star \to \zeon^\star$,
decide whether there exists $w \in \Sigma^+$ such that $ \sigma(w) = \tau(w)$.
For each integer $k \ge 1$, \pbkPCP{k} denotes the restriction of  \pbPCP{} to instances  $(\Sigma, \sigma, \tau)$ such that $\Sigma$ has cardinality $k$.

The Generalized Post Correspondence Problem (\pbGPCP)  \cite{EhrenfeuchtR81} is:
given an alphabet $\Sigma$, 
two morphisms $\sigma$, $\tau : \Sigma^\star \to \zeon^\star$ 
and four words $\sone$, $\stwo$, $\tone$, $\ttwo \in \zeon^\star$, 
decide whether there exists $w \in \Sigma^\star$ such that $\sone \sigma(w) \stwo = \tone \tau(w) \ttwo$.
Note that if  $\sone \stwo = \tone  \ttwo$ then $\mv$ is a feasible solution of  \pbGPCP{} on 
$(\Sigma, \sigma, \tau, \sone, \stwo, \tone, \ttwo)$, while all feasible solutions of \pbPCP{} are non-empty words.

\begin{remark} \label{rem:PCPk-GPCPk}
For every instance  $(\Sigma, \sigma, \tau)$ of  \pbPCP, 
$(\Sigma, \sigma, \tau)$ is a yes-instance of \pbPCP{} if and only if there exists $a \in \Sigma$ such that 
$(\Sigma, \sigma, \tau, \sigma(a), \mv, \tau(a), \mv)$ is a yes-instances \pbGPCP.
\end{remark}
For each integer $k \ge 1$, \pbkGPCP{k} denotes the restriction of  \pbGPCP{} to instances  $(\Sigma, \sigma, \tau, \sone, \stwo, \tone, \ttwo)$ such that $\Sigma$ has cardinality $k$.

\subsection{Semi-Thue systems}

Formally, a  \emph{semi-Thue system} is a pair $T = (\Sigma, R)$,
where $\Sigma$ is an alphabet 
and 
where $R$ is a  subset of $\Sigma^\star \times \Sigma^\star$. 
The elements of $R$ are called the \emph{rules} of $T$.
For every $x$, $y \in \Sigma^\star$, 
we say that $y$ is \emph{immediately derivable} from $x$ modulo $T$, 
and we write $x \deriv{T} y$, 
 if there exist $s$, $t$, $z$, $z' \in  \Sigma^\star$ such that 
$x = z s z'$, $y = z t z'$ and $(s, t) \in R$.
For every $u$, $v \in \Sigma^\star$, 
we say that $u$ is \emph{derivable} from $v$ modulo $T$, and we write $u \derivstar{T} v$, 
if there exist an integer $n \ge 0$ and $x_0$, $x_1$, \ldots, $x_n \in \Sigma^\star$  
such that $x_0 = u$, $x_n = v$, and $x_{i - 1} \deriv{T} x_{i}$ for every $i \in \seg{1}{n}$:
\begin{equation} \label{eq:u-x-v}
u =
 x_0 \Deriv{T} x_1 \Deriv{T} x_2 \Deriv{T} \dotsb \Deriv{T} x_n 
= v \, .
\end{equation} 
In other words, ${\derivstar{T}}$ is the reflexive-transitive closure of the binary relation ${\deriv{T}}$.
Define the \pbaccess{} problem as: 
given a semi-Thue system $T$
and two words $u$ and $v$ over the alphabet of $T$,
decide whether $u \derivstar{T} v$.
For every integer $k \ge 1$, 
define \pbkaccess{k} as the restriction of \pbaccess{} to instances $(T, u, v)$ such that $T$ has  $k$ rules.

\subsection{Decidability}

Let $k$ be a positive integer.
The decidabilities of \pbaccess{}, \pbPCP{} and \pbGPCP{} are linked by the following four facts.

\begin{fact} \label{fact:GPCP-PCP}
If \pbkGPCP{k} is decidable then  \pbkPCP{k} is decidable.
\end{fact}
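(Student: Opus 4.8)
The plan is to turn a decision procedure for \pbkGPCP{k} into one for \pbkPCP{k} by means of Remark~\ref{rem:PCPk-GPCPk}. Given an instance $(\Sigma, \sigma, \tau)$ of \pbkPCP{k}, I would form, for each of the $k$ letters $a \in \Sigma$, the instance $I_a \defeq (\Sigma, \sigma, \tau, \sigma(a), \mv, \tau(a), \mv)$. Each $I_a$ is a legitimate instance of \pbkGPCP{k}: the alphabet is still $\Sigma$, which has cardinality $k$, and $\sigma(a)$, $\tau(a)$ both lie in $\zeon^\star$. Hence the assumed decision procedure for \pbkGPCP{k} can be run on every $I_a$.

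The algorithm for \pbkPCP{k} then answers ``yes'' if and only if at least one of the finitely many instances $I_a$, with $a$ ranging over $\Sigma$, is a yes-instance of \pbGPCP. Correctness is immediate from Remark~\ref{rem:PCPk-GPCPk}, which asserts precisely that $(\Sigma, \sigma, \tau)$ is a yes-instance of \pbPCP{} if and only if $I_a$ is a yes-instance of \pbGPCP{} for some $a \in \Sigma$. Termination is clear, since $\Sigma$ is finite and each of the $k$ calls to the \pbkGPCP{k} procedure terminates by hypothesis.

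There is no genuine obstacle here; the only point worth checking is that the reduction does not enlarge the alphabet, so that decidability of \pbkGPCP{k} (and not merely of \pbGPCP{} over a bigger alphabet) is what gets used. This holds because the distinguished first letter $a$ of a candidate \pbPCP{} solution $w = a w'$ is absorbed into the boundary words $\sigma(a)$ and $\tau(a)$ rather than appended to $\Sigma$.
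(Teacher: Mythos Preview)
Your argument is correct and is exactly the approach the paper intends: the paper simply states that Fact~\ref{fact:GPCP-PCP} follows from Remark~\ref{rem:PCPk-GPCPk}, and you have spelled out precisely this reduction, including the key observation that the alphabet size stays~$k$.
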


\begin{fact} \label{fact:GPCP-acc}
 If \pbkGPCP{k + 2} is decidable then \pbkaccess{k} is decidable.
\end{fact}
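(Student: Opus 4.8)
The plan is to give a computable reduction of \pbkaccess{k} to \pbkGPCP{k+2}, which immediately yields the stated implication. So let $T=(\Sigma,R)$ be a semi-Thue system with exactly $k$ rules, say $R=\{(\ell_1,r_1),\dots,(\ell_k,r_k)\}$, and let $u,v\in\Sigma^\star$; from these data I would build, effectively, an instance of \pbGPCP{} over an alphabet of cardinality $k+2$ that is feasible if and only if $u\derivstar{T}v$. The underlying idea is the classical staircase-style encoding of a derivation, arranged so that the auxiliary alphabet has only two letters beyond one letter per rule. First I would fix a letter $\#\notin\Sigma$ and, writing $\Sigma=\{a_1,\dots,a_p\}$, the prefix code given by $c(a_j)\defeq\ze^j\on$ for $j\in\seg{1}{p}$ and $c(\#)\defeq\ze^{p+1}\on$, extended to an injective morphism $c\colon(\Sigma\cup\{\#\})^\star\to\zeon^\star$. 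The two features I would rely on are that $c$ is uniquely decodable and that, inside any word of $c\bigl((\Sigma\cup\{\#\})^\star\bigr)$, the blocks $c(\#)$ are recognisable: they are exactly the maximal runs of $\ze$'s of length $p+1$.

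The \pbGPCP{} instance would be the following. The alphabet is $\Gamma\defeq\{\ze,\on,\rho_1,\dots,\rho_k\}$, of cardinality $k+2$: two ``bit letters'' $\ze,\on$ and one ``rule letter'' $\rho_i$ per rule. The morphisms $\sigma,\tau\colon\Gamma^\star\to\zeon^\star$ fix $\ze$ and $\on$ and send $\rho_i$ to $c(\ell_i)$ and to $c(r_i)$ respectively. The four border words are $s\defeq c(\#)$, $t\defeq c(v\#)$, $s'\defeq c(\#u\#)$ and $t'\defeq\mv$, so that the instance asks for $w\in\Gamma^\star$ with $s\,\sigma(w)\,t=s'\,\tau(w)\,t'$, i.e.\ (after cancelling the common leading factor $c(\#)$) with $\sigma(w)\,c(v\#)=c(u\#)\,\tau(w)$. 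The intended meaning of a solution is that the common value is $c(\#x_0\#x_1\#\dots\#x_n\#)$ for a derivation $u=x_0\Deriv{T}x_1\Deriv{T}\dots\Deriv{T}x_n=v$, the $\sigma$-side spelling out the configurations $x_0,\dots,x_{n-1}$ and the $\tau$-side spelling out $x_1,\dots,x_n$, with the border words contributing the initial $\#u$ and the final $v\#$.

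For the ``if'' direction I would, given such a derivation, build $w$ as a concatenation of ``phases'': the $j$-th phase is the word over $\{\ze,\on\}$ spelling the bits of $c(y_j)$, followed by $\rho_{i_j}$, followed by the bits of $c(z_j\#)$, where $x_{j-1}=y_j\ell_{i_j}z_j\Deriv{T}y_jr_{i_j}z_j=x_j$. Since $\sigma,\tau$ fix the bit letters, the $j$-th phase maps to $c(x_{j-1}\#)$ under $\sigma$ and to $c(x_j\#)$ under $\tau$, so telescoping gives $\sigma(w)\,c(v\#)=c(x_0\#)\cdots c(x_n\#)=c(u\#)\,\tau(w)$ (and $w=\mv$ handles $n=0$). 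For the ``only if'' direction, given a solution $w=w_0\rho_{i_1}w_1\cdots\rho_{i_m}w_m$ with $w_j\in\zeon^\star$, I would peel codewords off the two representations of the common value $Z$ from the left, in lockstep. The prefix-code property forces the two sides to agree codeword by codeword, so $Z$ decodes uniquely via $c$ to some $E\in(\Sigma\cup\{\#\})^\star$; the front factor $c(\#u\#)$ then forces $E$ to begin with $\#u\#$, the back factor $c(v\#)$ forces $E$ to end with $v\#$, and since $\#$'s are detectable and no $\ell_i$ or $r_i$ contains a $\#$, each rule letter lies strictly inside one of the $\#$-delimited blocks of $E$. Writing $E=\#x_0\#x_1\#\cdots\#x_n\#$ and comparing the two representations block by block, the block for $x_{j-1}$ on the $\sigma$-side reads $c(x_{j-1})$ where the corresponding block on the $\tau$-side reads $c(x_j)$, and they differ only where that block's rule letters sit, so $x_{j-1}\derivstar{T}x_j$ (with $x_{j-1}=x_j$ when the block contains no rule letter). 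Since $x_0=u$ and $x_n=v$, chaining yields $u\derivstar{T}v$.

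The construction is clearly effective, so this reduces \pbkaccess{k} to \pbkGPCP{k+2}, as required. I expect the main obstacle to be exactly the peeling/synchronisation step in the ``only if'' direction: one must rule out ``skew'' solutions --- where a bit-factor $w_j$ straddles a codeword boundary, or a rule letter falls across a $\#$, or (when $\ell_i$ or $r_i$ is empty) one inadvertently peels on the wrong side --- and this is precisely where the unique decodability of $c$ and the detectability of $\#$ are used. The remaining verifications are routine bookkeeping.
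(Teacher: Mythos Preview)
Your overall strategy---two bit letters plus one letter per rule, with a staircase encoding of derivations---is exactly the paper's, but your concrete reduction is \emph{incorrect}: the ``only if'' direction fails. The culprit is the code $c(a_j)=\ze^j\on$, which is a prefix code but not comma-free, so a stray bit letter in $w$ can fuse with the image of a rule letter to spell a \emph{different} codeword. Your own caveat about ``skew'' solutions is prescient: unique decodability of $c$ does not guarantee that the common value $Z$ lies in the range of $c$ in a way that respects the $\sigma$/$\tau$ factorisations, and your sentence ``the prefix-code property forces the two sides to agree codeword by codeword'' is the unjustified step.

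Here is a counterexample. Take $\Sigma=\{\tta,\ttb\}$ with the single rule $(\tta,\tta\tta)$, and $u=\ttb$, $v=\ttb\tta$. Clearly $u\not\derivstar{T}v$, since no rule applies to $\ttb$. With $p=2$ your code gives $c(\tta)=\ze\on$, $c(\ttb)=\ze\ze\on$, $c(\#)=\ze\ze\ze\on$, and the instance has $\sigma(\rho_1)=\ze\on$, $\tau(\rho_1)=\ze\on\ze\on$, $s=c(\#)$, $t=c(\ttb\tta\#)$, $s'=c(\#\ttb\#)$, $t'=\mv$. Now set $w=\ze\,\rho_1\,\ze\ze\ze\on$. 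Then
\[
\sigma(w)=\ze\cdot\ze\on\cdot\ze\ze\ze\on=\ze\ze\on\,\ze\ze\ze\on=c(\ttb)\,c(\#),\qquad
\tau(w)=\ze\cdot\ze\on\ze\on\cdot\ze\ze\ze\on=c(\ttb\tta)\,c(\#),
\]
and one checks $s\,\sigma(w)\,t=\ze\ze\ze\on\,\ze\ze\on\,\ze\ze\ze\on\,\ze\ze\on\,\ze\on\,\ze\ze\ze\on=s'\,\tau(w)$. So the \pbGPCP{} instance has a solution although $(T,u,v)$ is a no-instance. The mechanism is exactly that the leading bit $\ze$ in $w$ glues onto $c(\ell_1)=c(\tta)=\ze\on$ to yield $c(\ttb)=\ze\ze\on$: a codeword boundary has shifted under your feet.

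The paper anticipates this pitfall (its Counterexample~\ref{cex:disprove-inj} is the same phenomenon) and defuses it by a preprocessing step (Proposition~\ref{prop:Ck}): first recode the semi-Thue system over $\zeon$ using an \emph{infinite comma-free} code $C$, so that all rule sides lie in $C^+$, and then choose a delimiter $\delimf=\ze\ze\on\on$ that is unbordered and does not overlap any word in $C^+$. With those structural guarantees in place, the accordion lemma (Lemma~\ref{lem:accordion}) gives the converse cleanly. Your argument can be repaired along the same lines: replace your prefix code by a comma-free one and choose the separator so that it cannot overlap any $c(\ell_i)$; then the peeling step becomes rigorous.
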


\begin{fact} \label{fact:PCP-GPCP}
 If \pbkPCP{k + 2} is decidable then \pbkGPCP{k} is decidable.
 \end{fact}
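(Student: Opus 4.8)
The plan is to describe a computable reduction from \pbkGPCP{k} to \pbkPCP{k + 2}: out of an instance $(\Sigma, \sigma, \tau, \sone, \stwo, \tone, \ttwo)$ we build an instance $(\hatSigma, \hatsigma, \hattau)$ of \pbPCP{} over an alphabet of cardinality $k + 2$ that is a yes-instance exactly when the original one is; Fact~\ref{fact:PCP-GPCP} then follows at once. Here $\hatSigma \defeq \Sigma \cup \{ \ttf, \tte \}$, where $\ttf$ and $\tte$ are two fresh letters meant to mark the left and the right end of any feasible solution.

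The one subtlety is that the target alphabet is $\zeon$ for both problems, so a genuine extra marker letter is not available; I simulate one inside $\zeon^\star$. Let $c : \zeon^\star \to \zeon^\star$ be the morphism with $c(\ze) = \ze\ze$ and $c(\on) = \ze\on$, and set $\# \defeq \on\on$. Read two letters at a time, $c(x)$ is a concatenation of blocks from $\{ \ze\ze, \ze\on \}$, so the block $\#$ never occurs inside it, which is what makes $\#$ usable as a separator. Define morphisms $x \mapsto \overrightarrow{x}$ and $x \mapsto \overleftarrow{x}$ from $\zeon^\star$ to $\zeon^\star$ by $\overrightarrow{x} = \# c(x_1)\, \# c(x_2) \cdots \# c(x_n)$ and $\overleftarrow{x} = c(x_1)\# \, c(x_2)\# \cdots c(x_n)\#$ for $x = x_1 x_2 \cdots x_n$ with each $x_i \in \zeon$; thus $x \mapsto \overrightarrow{x}$ (resp.\ $x \mapsto \overleftarrow{x}$) prints $x$ while inserting a $\#$ to the left (resp.\ right) of each coded letter. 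I will use three facts, all immediate: both maps are injective (decode block by block), both are morphisms, and $\# \overleftarrow{x} = \overrightarrow{x}\, \#$ for every $x \in \zeon^\star$.

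Now define $\hatsigma, \hattau : \hatSigma^\star \to \zeon^\star$ on letters by $\hatsigma(a) = \overrightarrow{\sigma(a)}$ and $\hattau(a) = \overleftarrow{\tau(a)}$ for $a \in \Sigma$, together with
\[
 \hatsigma(\ttf) = \#\, \overrightarrow{\sone}, \quad \hattau(\ttf) = \#\,\#\, \overleftarrow{\tone}, \quad \hatsigma(\tte) = \overrightarrow{\stwo}\, \#, \quad \hattau(\tte) = \overleftarrow{\ttwo}.
\]
A short computation using the three facts above gives, for every $w \in \Sigma^\star$,
\[
 \hatsigma(\ttf\, w\, \tte) = \#\, \overrightarrow{\sone\, \sigma(w)\, \stwo}\, \# \quad\text{and}\quad \hattau(\ttf\, w\, \tte) = \#\, \overrightarrow{\tone\, \tau(w)\, \ttwo}\, \# ,
\]
so, by injectivity of $x \mapsto \overrightarrow{x}$, we have $\hatsigma(\ttf\, w\, \tte) = \hattau(\ttf\, w\, \tte)$ if and only if $\sone \sigma(w) \stwo = \tone \tau(w) \ttwo$. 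In particular, if $(\Sigma, \sigma, \tau, \sone, \stwo, \tone, \ttwo)$ is a yes-instance of \pbGPCP{}, witnessed by some $w$, then $\ttf\, w\, \tte \in \hatSigma^+$ witnesses that $(\hatSigma, \hatsigma, \hattau)$ is a yes-instance of \pbPCP.

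The reverse implication is the part I expect to be the real work: I must show that \emph{every} feasible solution $w' \in \hatSigma^+$ of $(\hatSigma, \hatsigma, \hattau)$ has the normal form $w' = \ttf\, w\, \tte$ with $w \in \Sigma^\star$; the equivalence just displayed then closes the argument. The proof rests on the block structure of the common word $u = \hatsigma(w') = \hattau(w')$: every letter image under $\hatsigma$ and under $\hattau$ is a concatenation of two-letter blocks from $\{\ze\ze, \ze\on, \#\}$, hence so is $u$, and it can be scanned unambiguously block by block. Inspecting the leading blocks of $u$: on the $\hatsigma$-side every nonempty letter image begins with $\#$, whereas on the $\hattau$-side only $\hattau(\ttf)$ does --- in fact it begins with $\#\#$ --- which forces $\ttf$ to the front of $w'$; a mirror-image inspection of the trailing blocks forces $\tte$ to the end; and comparing, on the two sides, the number of occurrences in $u$ of the pattern $\#\#$ as a pair of consecutive blocks rules out any further $\ttf$ or $\tte$ inside $w'$. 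This bookkeeping is routine when $\sigma$ and $\tau$ are non-erasing and $\sone, \stwo, \tone, \ttwo$ are all non-empty; the genuinely awkward sub-cases, where some letter image is empty or a border word vanishes, are the main obstacle, and I would dispatch them either by a preliminary normalization of the \pbGPCP{} instance or by a direct case analysis on the short prefixes and suffixes of $u$. Since the map $(\Sigma, \sigma, \tau, \sone, \stwo, \tone, \ttwo) \mapsto (\hatSigma, \hatsigma, \hattau)$ is plainly computable and $\lgr{\hatSigma} = k + 2$, decidability of \pbkPCP{k + 2} entails decidability of \pbkGPCP{k}.
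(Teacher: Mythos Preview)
Your construction is essentially the paper's: two fresh end-markers plus a ``desynchronizing'' encoding that inserts a separator to the left of every output letter on the $\hatsigma$-side and to the right on the $\hattau$-side. The paper does the same via the morphisms $\lambda_\tw$, $\rho_\tw$ of Definition~\ref{def:lambda-rho} over an intermediate alphabet $\{\ze,\on,\tw,\ttb,\tte\}$ before packing into $\zeon^\star$ by an injective block code, whereas your $c$-and-$\#$ device does both at once. The forward direction and your analogue of Lemma~\ref{lem:w-bwe} match the paper exactly.

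There are, however, two genuine gaps in the reverse direction. First, your stated goal --- that \emph{every} solution $w'\in\hatSigma^+$ of $(\hatSigma,\hatsigma,\hattau)$ has the shape $\ttf\, w\,\tte$ with $w\in\Sigma^\star$ --- is simply false: if $w'$ is a solution then so is $w'w'$, which has an interior $\tte$ and an interior $\ttf$. The paper instead takes a \emph{shortest} solution and argues by contradiction: if the middle part contained an $\tte$, say $\ttb w_1\tte w_2\tte$, then Lemma~\ref{lem:prefix-we} (a count of occurrences of the genuine letter $\tte$ in $\hatsigma(x)$ and $\hattau(x)$) forces $\hatsigma(\ttb w_1\tte)=\hattau(\ttb w_1\tte)$, contradicting minimality; Lemma~\ref{lem:suffix-bw} does the symmetric job for $\ttb$. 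Your proposed $\#\#$-block count is more fragile, since $\#\#$ pairs can also arise at seams between successive letter-images; counting a bona fide letter, as the paper does, sidesteps that bookkeeping entirely. Second, the ``awkward sub-case'' you wave away is not merely awkward --- it makes the reduction incorrect as written: if some $a\in\Sigma$ has $\sigma(a)=\tau(a)=\mv$, then $\hatsigma(a)=\overrightarrow{\mv}=\mv=\overleftarrow{\mv}=\hattau(a)$, so $w'=a$ is a non-empty \pbPCP{} solution regardless of whether $I$ is a yes-instance. The paper handles this upfront with Lemma~\ref{lem:mv-mv-free}, reducing to $(\mv,\mv)$-free instances; once that is in place, the common image of any non-empty $w'$ is non-empty and the first/last-block inspection (the paper's Claims~\ref{claim:hatsigma} and~\ref{claim:hattau}) goes through uniformly with no case analysis on empty border words.
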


\begin{fact}[Claus's theorem] \label{fact:PCP-acc}
If \pbkPCP{k + 4} is decidable then  \pbkaccess{k} is  decidable.
\end{fact}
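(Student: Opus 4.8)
The plan is to obtain Fact~\ref{fact:PCP-acc} as an immediate consequence of Fact~\ref{fact:PCP-GPCP} and Fact~\ref{fact:GPCP-acc}, by composing the two implications. Assume \pbkPCP{k+4} is decidable. First I would invoke Fact~\ref{fact:PCP-GPCP} with its parameter instantiated to $k+2$ in place of $k$: since $(k+2)+2 = k+4$, its hypothesis is exactly ``\pbkPCP{k+4} is decidable'', which we have assumed, so we conclude that \pbkGPCP{k+2} is decidable. Then I would feed this conclusion into Fact~\ref{fact:GPCP-acc}, whose hypothesis ``\pbkGPCP{k+2} is decidable'' is now met verbatim, to obtain that \pbkaccess{k} is decidable. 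That two-link chain is the entire proof.

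Since the argument is nothing more than transitivity applied to two facts already established, there is no genuine obstacle at this stage: the substantive work sits upstream, in the reductions behind Fact~\ref{fact:GPCP-acc} (simulating a $k$-rule semi-Thue derivation by a \pbGPCP{} instance using two additional letters) and behind Fact~\ref{fact:PCP-GPCP} (encoding a \pbGPCP{} instance as a \pbPCP{} instance using two additional letters). The one point demanding care is the bookkeeping of alphabet cardinalities: the ``$+4$'' in Claus's bound is precisely $(+2)+(+2)$, the sum of the two overheads, so one must make sure the parameter shift in Fact~\ref{fact:PCP-GPCP} is applied to the correct variable, so that the composition lands on \pbkPCP{k+4} rather than on \pbkPCP{k+3} or \pbkPCP{k+5}.

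It is also worth noting, although it plays no role in the statement itself, that Fact~\ref{fact:GPCP-PCP} is not needed here; the contrapositive reading of Fact~\ref{fact:PCP-acc} — undecidability of \pbkaccess{k} forces undecidability of \pbkPCP{k+4} — is the form that will later be combined with the Matiyasevich--S{\'e}nizergues result for $k = 3$.
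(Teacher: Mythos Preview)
Your proposal is correct and is exactly the approach the paper takes: Remark~\ref{rem:fact-deux-trois} states verbatim that the conjunction of Facts~\ref{fact:GPCP-acc} and~\ref{fact:PCP-GPCP} yields Fact~\ref{fact:PCP-acc}, and the body of the paper then proves (slightly strengthened versions of) these two facts separately before recombining them in Corollary~\ref{coro:Claus}. Your bookkeeping of the $(+2)+(+2)=+4$ parameter shift matches the paper's.
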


Fact~\ref{fact:GPCP-PCP} follows from Remark~\ref{rem:PCPk-GPCPk},
Fact~\ref{fact:PCP-GPCP} is \cite[Theorem~3.2]{HarjuKHandbook},
and 
Fact~\ref{fact:PCP-acc} was originally stated by Claus \cite[Theorem~2]{Claus80} (see also \cite{Claus07,HarjuKK96,HarjuKHandbook}).
To our knowledge,
Fact~\ref{fact:GPCP-acc} is explicitly stated for the first time in the present paper.

\begin{remark} \label{rem:fact-deux-trois}
The conjunction of Facts~\ref{fact:GPCP-acc} and~\ref{fact:PCP-GPCP} yields Fact~\ref{fact:PCP-acc}.
\end{remark}

Since  Matiyasevich and S{\'e}nizergues have shown that \pbkaccess{3} is undecidable  \cite[Theorem~4.1]{MatiyasevichS05}, 
it follows from Fact~\ref{fact:PCP-acc}  that \pbkPCP{7} is undecidable \cite[Corollary~1]{MatiyasevichS05}.
In the same way  Fact~\ref{fact:GPCP-acc} yields that  \pbkGPCP{5} is undecidable (see also \cite[Theorem~7]{Claus07}).
Those results are the sharpest to date.
Indeed, the decidability of each of the following eight problems is  still open:
\begin{itemize}
\item
{\pbkaccess{1}},  {\pbkaccess{2}},
\item {\pbkGPCP{3}}, {\pbkGPCP{4}},  
\item {\pbkPCP{3}}, 
{\pbkPCP{4}}, {\pbkPCP{5}} and {\pbkPCP{6}}.
\end{itemize}
However,
Ehrenfeucht and Rozenberg  showed that \pbkPCP{2} and \pbkGPCP{2} are  decidable \cite{EhrenfeuchtR81} (see also \cite{EhrenfeuchtKR82,HalavaHH02}).

\subsection{Organization of the paper}

The aim of the paper is to present a clean, detailed proof of  Fact~\ref{fact:PCP-acc}.
%
We start in Section~\ref{sec:access} with some technicalities concerning  \pbaccess.
Then, Fact~\ref{fact:GPCP-acc} is proved in Section~\ref{sec:Acc-GPCP}, and 
 Fact~\ref{fact:PCP-GPCP} is proved in Section~\ref{sec:PCP-GPCP}. 

\section{More on the decidability of \pbaccess} \label{sec:access}

\begin{definition} \label{def:Ck}
The language $\left\{ \ze \on \ze^{n}  \on \ze \on : n \ge 2 \right\}$ is denoted by $C$.
For each integer $k \ge 1$, 
define $\calak$ as the set of all instances $(T, u, v)$ of \pbaccess{} such that $u$, $v \in C^\star$ and $T = (\zeon, R)$ for some $k$-element subset $R \subseteq C^+ \times C^+$.
\end{definition}

The essential properties of the gadget language $C$ are: 
$C$ is an infinite, 
binary, 
comma-free code (see Definitions~\ref{def:code} and~\ref{def:comma-free} below),
and no word in $C$ overlaps the delimiter word $\ze \ze \on \on$.

The aim of this section is to show:

\begin{proposition} \label{prop:Ck}
 For every integer $k \ge 1$, 
the general \pbkaccess{k} problem is decidable
if and only if its restriction to $\calak$ is decidable.
\end{proposition}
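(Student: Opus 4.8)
The plan is to prove the two directions separately. The ``only if'' direction is immediate: membership in $\calak$ is a decidable property of an instance (check that the alphabet of $T$ is $\zeon$, that $u, v \in C^\star$, and that $R$ is a set of $k$ pairs from $C^+ \times C^+$ --- all decidable, since $C$ is a code), so any algorithm deciding \pbkaccess{k} also decides its restriction to $\calak$. For the ``if'' direction I would give a computable reduction mapping an arbitrary instance $(T, u, v)$ of \pbkaccess{k}, with $T = (\Sigma, R)$, to an instance $(T', u', v') \in \calak$ with $u \derivstar{T} v$ if and only if $u' \derivstar{T'} v'$.

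The reduction encodes $\Sigma$ into $\zeon$ through the gadget code $C$. Since $C$ is infinite, I can fix an injective map from $\Sigma$ into $C$ and reserve one more codeword $c_0 \in C$ as a \emph{separator}; let $\varphi : \Sigma^\star \to \zeon^\star$ be the morphism that sends each letter to its codeword. As $C$ is a code, $\varphi$ is injective and $\varphi(\Sigma^\star) \subseteq C^\star$. I would take $u'$ and $v'$ to be the images of $u$ and $v$ under the encoding ``replace each letter $a$ by $c_0 \varphi(a)$, then append one more $c_0$'', and I would obtain $T' = (\zeon, R')$ by translating each rule $(s, t) \in R$ into a single rule whose two sides are built from $\varphi(s)$ and $\varphi(t)$ by the same interleaving with copies of $c_0$. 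The separators serve two purposes: they force a matched left-hand side to sit on (separator, letter) blocks, and they make both sides of every translated rule non-empty, hence members of $C^+$, even when $s$ or $t$ is empty (alternatively one could first normalise $T$ so that no rule has an empty side, by a routine construction that leaves the rule count unchanged). The output $(T', u', v')$ is effectively computable, $T'$ has exactly $k$ rules (distinct rules translate to distinct rules, by injectivity of $\varphi$), these rules lie in $C^+ \times C^+$, and $u', v' \in C^+ \subseteq C^\star$; hence $(T', u', v') \in \calak$.

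It then remains to establish $u \derivstar{T} v \iff u' \derivstar{T'} v'$. The forward implication is a routine step-by-step simulation: a single $T$-step $x \deriv{T} y$ translates into a single $T'$-step between the corresponding encoded words, and one chains these along a $T$-derivation from $u$ to $v$. The converse is the substantive part. I would prove, by induction on the length of a $T'$-derivation issued from $u'$, that every word it reaches is the encoding of some word over $\Sigma$ (flanked by separators) and that it reflects a genuine $T$-derivation. In the induction step, if a rule of $R'$ applies to a factor of the current encoded word $w'$, then --- since $C$ is \emph{comma-free} --- each codeword occurring inside $w' \in C^\star$ is aligned with the (unique) codeword factorisation of $w'$; applying this codeword by codeword to the matched factor shows that the factor is an honest occurrence, separators included, of the block-encoding of the corresponding left-hand side $s$. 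Hence the $T'$-step reflects one application of $(s, t)$ to the word decoded from $w'$, and the result is again an encoded word.

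The step I expect to be the real obstacle is precisely this reflection argument: one must rule out parasitic rewrites of $T'$ that straddle codeword boundaries and leave the image of the encoding. Making it airtight needs (i) a clean alignment lemma for comma-free codes (every occurrence of a codeword inside a word of $C^\star$ lies at codeword boundaries, so every factor that itself belongs to $C^+$ is codeword-aligned), and (ii) uniform bookkeeping of the separators so that the three shapes of rules --- $s$ and $t$ both non-empty, $s = \mv$, $t = \mv$ --- are all covered. Everything else (the ``only if'' direction, computability, the rule count, and soundness of the simulation) is bookkeeping. Note that the non-overlap property of $C$ with the delimiter word $\ze\ze\on\on$ is not used here; it will only be needed in the reductions of the later sections.
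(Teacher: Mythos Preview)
Your proposal is correct and is essentially the paper's argument, just packaged as a single reduction instead of two. The paper first reduces to $\mv$-free instances by interleaving a fresh letter $d$ (Lemma~\ref{lem:semi-Thue-free}) and then applies an injective morphism $\alpha:\Sigma\to C$ and invokes the comma-free alignment property (Lemma~\ref{lemm:image-comma-free}); your encoding $w\mapsto c_0\varphi(a_1)c_0\cdots c_0\varphi(a_n)c_0$ is exactly the composite of these two maps with $\alpha(d)=c_0$, and your reflection argument is the content of Lemmas~\ref{lem:imply-rev-alpha} and~\ref{lemm:image-comma-free}.
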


The idea to prove Proposition~\ref{prop:Ck} is to construct a many-one reduction based  on the following gadget transformation:

\begin{definition} \label{def:image-Thue}
Let $T = (\Sigma, R)$ be a semi-Thue system, 
let $\Delta$ be an alphabet, 
and let $\alpha: \Sigma^\star \to \Delta^\star$.
Define the image of $T$ under $\alpha$, denoted $\alpha(T)$, as the semi-Thue system over $\Delta$ with rule set $\{ (\alpha(s) , \alpha(t)) : (s, t) \in R \}$.
\end{definition}

The next two lemmas are straightforward.

\begin{lemma} \label{lem:imply-alpha}
Let $\Sigma$ and $\hatSigma$ be alphabets, 
let $T$ be a semi-Thue system over $\Sigma$,
let $\hatT$ be a semi-Thue system over $\hatSigma$, 
and let $\alpha : \Sigma^\star \to \hatSigma^\star$ be such that 
for every $x$, $y \in \Sigma^\star$,
$x \deriv{T} y$ implies $\alpha(x) \deriv{\hatT} \alpha(y)$.
For every $u$, $v \in \Sigma^\star$,
$u \derivstar{T} v$ implies $\alpha(u) \derivstar{\hatT} \alpha(v)$.
\end{lemma}

\begin{proof}
Assume that $u \derivstar{T} v$: 
there exist an integer $n \ge 0$ and $n + 1$ words $x_0$, $x_1$, \ldots, $x_n$ over $\Sigma$ such that Equation~\eqref{eq:u-x-v} holds. 
It follows
\begin{equation}  \label{eq:alpha-u-x-v}
\alpha(u) = 
\alpha(x_0) 
\Deriv{\hatT} \alpha(x_1) 
\Deriv{\hatT} \alpha(x_2) 
\Deriv{\hatT} \dotsb 
\Deriv{\hatT} \alpha(x_n)
 = \alpha(v) \, ,
\end{equation}
and thus $\alpha(u) \derivstar{\hatT} \alpha(v)$.
\end{proof}

\begin{lemma} \label{lem:deriv-morph}
In the notation of Definition~\ref{def:image-Thue}, if $\alpha$ is a morphism then for every $u$, $v \in \Sigma^\star$, 
$u \derivstar{T} v$ implies ${\alpha(u)} \derivstar{\alpha(T)} {\alpha(v)}$.
\end{lemma}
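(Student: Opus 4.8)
The plan is to reduce this directly to Lemma~\ref{lem:imply-alpha}, applied with $\hatSigma \defeq \Delta$ and $\hatT \defeq \alpha(T)$. Under this specialization, the conclusion of Lemma~\ref{lem:imply-alpha} is exactly the conclusion we want, so the only thing that needs checking is the hypothesis of that lemma: namely, that for every $x$, $y \in \Sigma^\star$, $x \deriv{T} y$ implies $\alpha(x) \deriv{\alpha(T)} \alpha(y)$.

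To verify this single-step implication, I would simply unfold the definition of immediate derivability. Suppose $x \deriv{T} y$. Then there exist $s$, $t$, $z$, $z' \in \Sigma^\star$ with $x = z s z'$, $y = z t z'$ and $(s, t) \in R$. Since $\alpha$ is a morphism, it distributes over these factorizations: $\alpha(x) = \alpha(z)\,\alpha(s)\,\alpha(z')$ and $\alpha(y) = \alpha(z)\,\alpha(t)\,\alpha(z')$. By Definition~\ref{def:image-Thue}, the pair $(\alpha(s), \alpha(t))$ is a rule of $\alpha(T)$. Taking $\alpha(z)$ and $\alpha(z')$ as the surrounding context words, this exhibits $\alpha(x) \deriv{\alpha(T)} \alpha(y)$, as required. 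Lemma~\ref{lem:imply-alpha} then yields that $u \derivstar{T} v$ implies $\alpha(u) \derivstar{\alpha(T)} \alpha(v)$.

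There is essentially no obstacle here: the morphism property of $\alpha$ is precisely what allows it to be pushed through the factorization $z s z'$, and the rule set of $\alpha(T)$ was defined exactly so that the $\alpha$-image of any rule of $T$ is a rule of $\alpha(T)$. The one point worth noting is that $\alpha$ need not be injective, so several distinct rules of $T$ may be mapped to the same rule of $\alpha(T)$; but this is harmless, since all we use is the membership $(\alpha(s), \alpha(t)) \in \{ (\alpha(s'), \alpha(t')) : (s', t') \in R \}$, not any bijectivity of the induced map on rules. Hence the proof is a one-line invocation of Lemma~\ref{lem:imply-alpha} once the step implication is recorded.
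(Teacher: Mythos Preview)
Your proof is correct and follows essentially the same approach as the paper: verify the single-step implication $x \deriv{T} y \Rightarrow \alpha(x) \deriv{\alpha(T)} \alpha(y)$ by unfolding the factorization $x = zsz'$, $y = ztz'$ and using the morphism property, then invoke Lemma~\ref{lem:imply-alpha} with $\hatT \defeq \alpha(T)$. Your additional remark about non-injectivity of $\alpha$ is accurate and harmless, though the paper does not make it explicit.
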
 

\begin{proof}
We apply Lemma~\ref{lem:imply-alpha} with  $\hatT \defeq \alpha(T)$. 
Let $x$, $y \in \Sigma^\star$ be such that ${x} \deriv{T} y$: 
there exist $s$, $t$, $z$, $z' \in \Sigma^\star$ such that 
$x = z  s z'$, $y = z t z'$ and $(s, t) \in R$.
Since $\alpha$ is a morphism, 
$\alpha(x)$ and $\alpha(y)$ can be parsed as follows:
$\alpha(x) = \alpha(z) \alpha(s) \alpha(z')$,
$\alpha(y) = \alpha(z) \alpha(t) \alpha(z')$
and 
$(\alpha(s),  \alpha(t))$ is a rule of $\alpha(T)$.
Hence, we get that 
${\alpha(x)} \deriv{\alpha(T)} {\alpha(y)}$.
\end{proof}

\begin{definition}
 Let $(s, t)$ be a rule of some semi-Thue system: $(s, t)$ is a pair of words.
We say that $(s, t)$ is an \emph{insertion rule} if $s = \mv$.
We say that $(s, t)$ is a \emph{deletion rule} if  $t = \mv$.
A semi-Thue system is called \emph{$\mv$-free} if it has neither insertion nor deletion rule.
By extension, an instance $(T, u, v)$ of \pbaccess{} is called  \emph{$\mv$-free} if the semi-Thue system $T$ is $\mv$-free.
\end{definition} 

Note that every instance of \pbkaccess{k} that belongs to $\calak$ is $\mv$-free.
The next two gadget morphisms play crucial roles in the proofs of both Lemma~\ref{lem:semi-Thue-free} and Theorem~\ref{th:PCP-GPCP} below.
 
\begin{definition} \label{def:lambda-rho}
 Let $\Sigma$ be an alphabet and let $d$ be a letter.
Define $\lambda_d$ and $\rho_d$ as the morphisms from 
$\Sigma^\star$ 
to 
${(\Sigma \cup \{ d \})}^\star$ 
given  by: 
$\lambda_d (a) \defeq d a $ and  $\rho_d (a) \defeq a d$  for every $a \in \Sigma$.
\end{definition}

For instance, 
$\lambda_\tw(\ze \on \on \ze\on )$ 
and 
$\rho_\tw (\ze \on \on \ze\on )$ 
equal 
$\tw \ze \tw \on \tw \on \tw \ze \tw \on $ 
and
$\ze \tw \on \tw \on \tw \ze \tw \on \tw$, 
respectively. 

\begin{lemma} \label{lem:semi-Thue-free}
For every integer $k \ge 1$, 
\pbkaccess{k} is decidable if and only if the problem is decidable on $\mv$-free instances.
\end{lemma}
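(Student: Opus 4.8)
The plan is to prove the nontrivial direction: assuming \pbkaccess{k} is decidable on $\mv$-free instances, we must decide an arbitrary instance $(T, u, v)$ with $T = (\Sigma, R)$ having $k$ rules. The strategy is to build, by a sequence of gadget transformations, an equivalent $\mv$-free instance with the same number of rules, and to handle separately the finitely many "degenerate" situations that the transformations do not cover. I would first dispose of insertion and deletion rules, whose presence is the source of all $\mv$-freeness failures. The key observation is that a rule is problematic only when one of its sides is empty, and that $\lambda_d$ (or $\rho_d$) sends every nonempty word to a nonempty word \emph{and} pads it with a fresh letter $d \notin \Sigma$ on each letter boundary, so that after applying such a morphism the image of the empty word is still empty but there is now "room" to encode insertions and deletions as genuine two-sided rewrites near the delimiter $d$.

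Concretely, here is the sequence of steps I expect to carry out. First, introduce a fresh letter $d$ and pass from $(T,u,v)$ to $(\lambda_d(T), \lambda_d(u), \lambda_d(v))$ over the alphabet $\Sigma \cup \{d\}$. By Lemma~\ref{lem:deriv-morph} one direction of the equivalence $u \derivstar{T} v \iff \lambda_d(u) \derivstar{\lambda_d(T)} \lambda_d(v)$ is immediate; the converse requires checking that any derivation in $\lambda_d(T)$ can only rewrite inside the image of $\lambda_d$ (because the fresh letter $d$ marks out the word boundaries, so the left-hand sides $\lambda_d(s)$ can only be matched "in phase"), hence pulls back to a derivation in $T$. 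Second, observe that in $\lambda_d(T)$ a rule $(\lambda_d(s), \lambda_d(t))$ has empty left-hand side only if $s = \mv$, in which case it reads $(\mv, \lambda_d(t))$ with $\lambda_d(t)$ starting with $d$; such an insertion rule can be replaced by the $\mv$-free rule $(d, d\,\lambda_d(t))$ — inserting $\lambda_d(t)$ right after any occurrence of $d$ — and symmetrically a deletion rule $(\lambda_d(s), \mv)$ becomes $(\lambda_d(s)\, d,\, d)$. One checks these replacements do not change the derivability relation on words in the image of $\lambda_d$, again using that $d$ occurs there exactly at letter boundaries. This keeps the rule count at $k$ and yields an $\mv$-free instance, which the assumed oracle decides.

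The main obstacle I anticipate is the bookkeeping in the converse ("pull-back") direction: one must argue that \emph{no} derivation step in the transformed system can straddle or fall outside the $\lambda_d$-image structure, so that the transformed instance is a yes-instance iff the original one is. This is where the specific shape of $\lambda_d$ (every letter preceded by $d$, and $d$ itself not in $\Sigma$) does the work, and it has to be spelled out rule by rule for the insertion/deletion replacements as well as for the base morphism. A secondary subtlety is the edge case where $T$ itself contains the trivial rule $(\mv, \mv)$ or where $u$ or $v$ is already empty; these are finitely many easily-checked special cases (e.g. $\mv \derivstar{T} v$ iff $v = \mv$ once all genuine insertion rules have been accounted for), and I would dispatch them before invoking the oracle so that the reduction is total. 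The "only if" direction of the lemma is trivial, since $\mv$-free instances form a subclass of all instances.
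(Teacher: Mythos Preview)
Your overall strategy---introduce a fresh delimiter $d$ and encode words so that empty rule-sides become nonempty---is the right one, and it is the idea behind the paper's proof. But the concrete replacement you propose for insertion rules is incorrect, and the argument you sketch for the pull-back direction is also flawed.

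Take the single insertion rule $(\mv,\tta)$ over $\Sigma=\{\tta,\ttb\}$, with $u=\ttb$ and $v=\ttb\tta$. Then $u\derivstar{T}v$. Under your reduction $\lambda_d(u)=d\ttb$ and $\lambda_d(v)=d\ttb d\tta$, and the insertion rule becomes $(d,\,d\,\lambda_d(\tta))=(d,\,dd\tta)$. Starting from $d\ttb$ the only $d$ available is the leading one, and rewriting it yields $dd\tta\ttb$; every further application keeps the unique $\ttb$ pinned at the last position with no $d$ after it. The target $d\ttb d\tta$ is therefore unreachable, so the reduced instance is a no-instance although the original is a yes-instance. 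The symmetric replacement $(\lambda_d(s)d,\,d)$ for deletion rules fails analogously at the left end. The underlying reason is that $\lambda_d(w)$ has a $d$ to the \emph{left} of every letter but none at the right boundary, so your anchored rule cannot simulate an insertion at the end of the word. (Your side remark that derivations in $\lambda_d(T)$ stay inside the image of $\lambda_d$ is also false once insertion rules are present: from $\lambda_d(\tta)=d\tta$ the rule $(\mv,d\tta)$ of $\lambda_d(T)$ allows $d\tta\mapsto d\,d\tta\,\tta=dd\tta\tta$, which is not a $\lambda_d$-image.)

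The paper avoids all of this with one small twist: instead of the morphism $\lambda_d$, it uses the map $\mu(w)\defeq\lambda_d(w)\,d=d\,\rho_d(w)$. This $\mu$ is \emph{not} a morphism---indeed $\mu(\mv)=d$---and precisely for that reason $\mu(T)$ is automatically $\mv$-free, with no need to hand-modify any rules. The factorisation identity $\mu(zsz')=\lambda_d(z)\,\mu(s)\,\rho_d(z')$ gives $x\deriv{T}y\Rightarrow\mu(x)\deriv{\mu(T)}\mu(y)$ directly, and the converse is obtained by applying the $d$-erasing morphism $\widehat{\mu}$, which satisfies $\widehat{\mu}\circ\mu=\mathrm{id}$ and hence $\widehat{\mu}(\mu(T))=T$; Lemma~\ref{lem:deriv-morph} then finishes the job without any closedness hypothesis. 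If you want to salvage your two-step plan, the minimal fix is to append a trailing $d$ to the encoded words as well (i.e.\ work with $\lambda_d(u)d$ and $\lambda_d(v)d$), at which point you have essentially reinvented $\mu$.
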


\begin{proof}
We present a many-one reduction from \pbkaccess{k} in its general form to \pbkaccess{k} on $\mv$-free instances.

Let $(T, u, v)$ be an instance of \pbkaccess{k}.
Let $\Sigma$ denote the alphabet $T$,
let $d$ be a symbol such that $d \notin \Sigma$, 
and 
let $\mu : \Sigma^\star \to {(\Sigma \cup \{ d \}) }^\star$ be defined by:
$\mu(w) \defeq \lambda_d(w) d = d \rho_d(w)$ for every $w \in \Sigma^\star$. 
Clearly, $(\mu(T), \mu(u), \mu(v))$ is an $\mv$-free instance  of \pbkaccess{k},
 and $(\mu(T), \mu(u), \mu(v))$  is computable from $(T, u, v)$.

It remains to check  the correctness statement:
$u \derivstar{T} v$ if and only if  ${\mu(u)} \derivstar{\mu(T)} {\mu(v)}$.


\begin{trivlist}
 \itemonlyif{}
Let $x$, $y \in \Sigma^\star$ be such that ${x} \deriv{T} {y}$: 
there exist $s$, $t$, $z$, $z' \in \Sigma^\star$ such that 
$x = z s z'$,
$y = z t z'$ and $(s, t)$ is a rule of $T$.
Clearly, 
$\mu(x)$ and $\mu(y)$ can be parsed as follows:
$\mu(x) =  \lambda_d(z) \mu(s) \rho_d(z')$,
$\mu(y) =  \lambda_d(z) \mu(t) \rho_d(z')$
and 
 $(\mu(s), \mu(t))$ is a rule of $\mu(T)$.
Hence,  we get that ${\mu(x)} \deriv{\mu(T)} {\mu(y)}$.
It now follows from Lemma~\ref{lem:imply-alpha} 
(applied with $\alpha \defeq \mu$ and $\hatT \defeq \mu(T)$)
that $u \derivstar{T} v$ implies ${\mu(u)} \derivstar{\mu(T)} {\mu(v)}$.

\itemif{}  let $\hatalpha : {(\Sigma \cup \{d \})}^\star \to \Sigma^\star$ denote the morphism defined by:
$\hatalpha(a) \defeq a$ for every $a \in \Sigma$ and $\hatalpha(d) \defeq \mv$.
It is clear that  $\hatalpha(\mu(w)) = w$ for every $w \in \Sigma^\star$, 
and thus $T = \hatalpha(\mu(T))$.
Hence, 
for every $\hat u$, $\hat v \in {(\Sigma \cup \{d \})}^\star$,
${\hat u} \derivstar{\mu(T)} { \hat v}$ 
implies 
$\hatalpha(\hat u) \derivstar{T} \hatalpha(\hat  v)$
 by Lemma~\ref{lem:deriv-morph} 
(applied with $\alpha \defeq \hatalpha$ and $T \defeq \mu(T)$). 
In particular, 
${\mu(u)} \derivstar{\mu(T)} {\mu(v)}$ 
implies 
$u = \hatalpha(\mu(u)) \derivstar{T} \hatalpha(\mu(v)) = v$.
\end{trivlist}
\end{proof} 


Given an alphabet $\Sigma$, 
a semi-Thue system $T$ over $\Sigma$,
and a subset $L \subseteq \Sigma^\star$, 
we say that $L$ is \emph{closed under derivation modulo $T$} if for every $x \in L$ and every $y \in \Sigma^\star$, 
$x \deriv{T} y$ implies $y \in L$. 
The next lemma is an \emph{ad hoc} counterpart of Lemma~\ref{lem:imply-alpha}.

\begin{lemma} \label{lem:imply-rev-alpha}
Let $\Sigma$ and $\hatSigma$ be alphabets, 
let $T$ be a semi-Thue system over $\Sigma$,
let ${\hatT}$ be a semi-Thue system over $\hatSigma$, 
and let $\alpha : \Sigma^\star \to \hatSigma^\star$ be such that:
\begin{enumerate}[$(i)$]
\item 
the range of $\alpha$ is closed under derivation modulo $\hatT$, and
\item  for every $x$, $y \in \Sigma^\star$, 
 $\alpha(x) \deriv{\hatT} \alpha(y)$ implies $x \deriv{T} y$.
\end{enumerate}
For every $u$, $v \in \Sigma^\star$,
$\alpha(u) \derivstar{\hatT} \alpha(v)$ implies $u \derivstar{T} v$.
\end{lemma}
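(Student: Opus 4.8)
The plan is to mirror the proof of Lemma~\ref{lem:imply-alpha}, but now pushing a derivation \emph{forward} along $\alpha$ rather than pulling one back. Suppose $\alpha(u) \derivstar{\hatT} \alpha(v)$. Then there exist an integer $n \ge 0$ and words $\hat x_0, \hat x_1, \dots, \hat x_n$ over $\hatSigma$ with $\hat x_0 = \alpha(u)$, $\hat x_n = \alpha(v)$, and $\hat x_{i-1} \deriv{\hatT} \hat x_i$ for each $i \in \seg{1}{n}$. The first key step is to show by induction on $i$ that every $\hat x_i$ lies in the range of $\alpha$: the base case is $\hat x_0 = \alpha(u)$, and the inductive step is exactly hypothesis~$(i)$, closure of $\mathrm{range}(\alpha)$ under $\deriv{\hatT}$. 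So for each $i$ we may fix some $x_i \in \Sigma^\star$ with $\alpha(x_i) = \hat x_i$, choosing $x_0 \defeq u$ and $x_n \defeq v$ (legitimate since $\hat x_0 = \alpha(u)$ and $\hat x_n = \alpha(v)$).

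The second key step is to transfer each single-step derivation. For each $i \in \seg{1}{n}$ we have $\alpha(x_{i-1}) = \hat x_{i-1} \deriv{\hatT} \hat x_i = \alpha(x_i)$, so hypothesis~$(ii)$ gives $x_{i-1} \deriv{T} x_i$. Stringing these together yields
\begin{equation*}
u = x_0 \Deriv{T} x_1 \Deriv{T} x_2 \Deriv{T} \dotsb \Deriv{T} x_n = v \, ,
\end{equation*}
hence $u \derivstar{T} v$, which is the conclusion. The case $n = 0$ (where $\alpha(u) = \alpha(v)$) is handled trivially by the reflexivity of $\derivstar{T}$, since then $x_0 = u$ and $x_0 = v$ are forced to be the same word by our choice.

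The only subtlety — and the reason hypothesis~$(i)$ is needed at all — is that $\alpha$ need not be injective, so the intermediate words $\hat x_i$ of the $\hatT$-derivation are not a priori of the form $\alpha(\text{something})$; without closure of the range there would be no $x_i$ to speak of, and even with it, the $x_i$ are not canonically determined. This is why one argues step by step rather than trying to ``apply $\alpha^{-1}$'' globally. I do not expect any real obstacle here: once the range-membership induction is in place, the argument is a routine concatenation of single steps, entirely parallel to Lemma~\ref{lem:imply-alpha}. The main thing to be careful about is to invoke $(i)$ to justify picking the preimages $x_i$ before invoking $(ii)$ to get the individual derivation steps.
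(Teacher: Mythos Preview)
Your argument is essentially the paper's argument: expand the $\hatT$-derivation into single steps $\hat x_0,\dots,\hat x_n$, use hypothesis~$(i)$ to lift each $\hat x_i$ to a preimage $x_i$ (with $x_0=u$, $x_n=v$), then apply hypothesis~$(ii)$ stepwise to obtain Equation~\eqref{eq:u-x-v}.

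One caveat on your treatment of $n=0$: the sentence ``$x_0=u$ and $x_0=v$ are forced to be the same word by our choice'' is circular. When $n=0$ the simultaneous assignments $x_0\defeq u$ and $x_n\defeq v$ are consistent only if $u=v$ already; nothing in the hypotheses guarantees this, since $\alpha$ is not assumed injective. Indeed the lemma as stated actually fails at $n=0$: take $T$ and $\hatT$ with empty rule sets and any non-injective $\alpha$; then $(i)$ and $(ii)$ hold vacuously, yet $\alpha(u)=\alpha(v)$ does not force $u\derivstar{T}v$. The paper's proof glosses over exactly the same point. This is harmless in the paper's only application (Lemma~\ref{lemm:image-comma-free}), where $\alpha$ is an injective morphism, but your explicit justification of the $n=0$ case is not valid as written.
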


\begin{proof}
Assume that $\alpha(u) \derivstar{\hatT} \alpha(v)$:
there exist an integer $n \ge 0$ and $n + 1$ words $\hat x_0$, $\hat x_1$, \ldots, $\hat x_n$ over $\hatSigma$ such that 
$$
\alpha(u) =
\hat x_0 \Deriv{\hatT} \hat x_1 \Deriv{\hatT} \hat x_2 \Deriv{\hatT} \dotsb \Deriv{\hatT} \hat x_n 
= \alpha(v) \, .
$$
It follows from point $(i)$ that $\hat x_i$ belongs to the range of $\alpha$ for every $i \in \seg{0}{n}$:
let $x_0 \defeq u$, 
let $x_n \defeq v$, and for each $i \in \seg{1}{n - 1}$,  
let $x_i \in \Sigma^\star$ be such that $\hat x_i = \alpha(x_i)$.
Now,
Equation~\eqref{eq:alpha-u-x-v} holds, 
and thus Equation~\eqref{eq:u-x-v} follows by point~$(ii)$.
We have thus shown that $u \derivstar{T} v$.
\end{proof}

Surprisingly,
 hypothesis~$(i)$ of Lemma~\ref{lem:imply-rev-alpha} is not disposable.
Indeed, 
 let $T = (\Sigma, R)$ be a semi-Thue system,
and let $u_0$, $v_0 \in \Sigma^\star$ be such that $u_0 \nderivstar{T} v_0$: 
a trivial choice for $R$ is the empty set.
Let $a$ be a symbol such that $a \notin \Sigma$, 
and let ${\hatT} \defeq \left( \Sigma \cup \{ a \}, R \cup \{ (u_0, a), (a, v_0) \} \right)$.
For every $x$, $y \in \Sigma^\star$,  $x \deriv{T} y$ is equivalent to $x \deriv{\hatT} y$.
However,  
$\Sigma^\star$ is not closed under derivation modulo $\hatT$, 
and $u \derivstar{\hatT} v$ does not imply  $u \derivstar{T} v$ for every $u$, $v \in \Sigma^\star$,
since  $u_0 \derivstar{\hatT}  v_0$.

\begin{definition} \label{def:code}
Let  $X$ be a language.
We say that $X$ is a \emph{code} if the property
$$
x_1 x_2 \dotsm x_m  = y_1 y_2  \dotsm y_n
\iff 
(x_1, x_2, \dotsc, x_m) = (y_1, y_2, \dotsc, y_n)
$$ 
holds for any integers $m$, $n \ge 1$ and any elements $x_1$, $x_2$, \ldots, $x_m$, $y_1$, $y_2$, \ldots, $y_n \in X$.
\end{definition} 

Note that $(x_1, x_2, \dotsc, x_m) = (y_1, y_2, \dotsc, y_n)$ means that both $m = n$ and $x_i = y_i$ for every $i \in \seg{1}{m}$.
In other words, a language $X$ is a code if each word in $X^\star$ has a unique factorization over $X$.
A morphism  $\alpha: \Sigma^\star \to \Delta^\star$ is injective if and only if 
$\alpha$ is injective on $\Sigma$ and $\alpha(\Sigma)$ is a code.

\begin{definition}[\cite{BerstelP85}] \label{def:comma-free}
A code $X$ is called \emph{comma-free} if for every words $x$, $z$ and $z'$,
$$(x \in X \mathrel{\textup{and}}  z  x z' \in X^\star ) \implies (z \in  X^\star \mathrel{\textup{and}} z' \in X^\star) \, .
$$
\end{definition}

Every comma-free code is a \emph{bifix} code: 
no  word in the language is a  prefix or a suffix of another word in the language. 
For instance, 
$K \defeq  \left\{ \on \ze^n \on : n \ge 1 \right\}$ and $C$ are  comma-free codes,
but $K \cup \{ \on \on \}$ is a bifix code which is not comma-free.

\begin{lemma} \label{lemm:image-comma-free}
In the notation of Definition~\ref{def:image-Thue}, assume that
\begin{enumerate}[$(i)$]
\item $\alpha$ is an injective morphism,
\item $\alpha(\Sigma)$ is a comma-free code, and 
\item $T$ has no insertion rule.
\end{enumerate}
For every $u$, $v \in \Sigma^\star$, 
$u \derivstar{T} v$ is equivalent to 
$\alpha(u) \derivstar{\alpha(T)} \alpha(v)$.
\end{lemma}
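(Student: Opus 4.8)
The plan is to prove the two implications of the equivalence separately, using the two ``transfer'' lemmas already established. The forward direction, that $u \derivstar{T} v$ implies $\alpha(u) \derivstar{\alpha(T)} \alpha(v)$, is immediate from Lemma~\ref{lem:deriv-morph}, which requires only that $\alpha$ be a morphism; hypotheses $(i)$, $(ii)$, $(iii)$ are not needed here. So the work is entirely in the converse: $\alpha(u) \derivstar{\alpha(T)} \alpha(v)$ implies $u \derivstar{T} v$. For this I would invoke Lemma~\ref{lem:imply-rev-alpha} with $\hatT \defeq \alpha(T)$, so it suffices to verify its two hypotheses: that the range of $\alpha$ (which, since $\alpha$ is a morphism, is $\alpha(\Sigma)^\star$) is closed under derivation modulo $\alpha(T)$, and that for every $x, y \in \Sigma^\star$, $\alpha(x) \deriv{\alpha(T)} \alpha(y)$ implies $x \deriv{T} y$.

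To check closure of $\alpha(\Sigma)^\star$ under $\deriv{\alpha(T)}$: suppose $\hat x \in \alpha(\Sigma)^\star$ and $\hat x \deriv{\alpha(T)} \hat y$. Then $\hat x = \hat z\, \alpha(s)\, \hat z'$ and $\hat y = \hat z\, \alpha(t)\, \hat z'$ for some rule $(s,t)$ of $T$ and some $\hat z, \hat z' \in \hatSigma^\star$. Since $T$ has no insertion rule, $s \ne \mv$, so $\alpha(s)$ is a nonempty word lying in $\alpha(\Sigma)^+$. The comma-free property of $\alpha(\Sigma)$, applied to the occurrence of $\alpha(s)$ inside $\hat x \in \alpha(\Sigma)^\star$, then forces $\hat z \in \alpha(\Sigma)^\star$ and $\hat z' \in \alpha(\Sigma)^\star$; here I should be slightly careful, because Definition~\ref{def:comma-free} is stated for a single codeword $x \in X$ rather than for $x \in X^+$, so I would first factor $\alpha(s) = \alpha(s_1)\cdots\alpha(s_\ell)$ over $\alpha(\Sigma)$ (possible since $s \ne \mv$ and $\alpha$ is a morphism) and apply comma-freeness to the leftmost letter-image $\alpha(s_1)$ and, symmetrically, to $\alpha(s_\ell)$, to peel off $\hat z$ on the left and $\hat z'$ on the right. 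Consequently $\hat y = \hat z\,\alpha(t)\,\hat z'$ is a concatenation of words in $\alpha(\Sigma)^\star$ (using that $\alpha(t) \in \alpha(\Sigma)^\star$ as well, since $t \in \Sigma^\star$), hence $\hat y \in \alpha(\Sigma)^\star$, which is hypothesis $(i)$ of Lemma~\ref{lem:imply-rev-alpha}.

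For hypothesis $(ii)$: suppose $x, y \in \Sigma^\star$ with $\alpha(x) \deriv{\alpha(T)} \alpha(y)$, so $\alpha(x) = \hat z\,\alpha(s)\,\hat z'$ and $\alpha(y) = \hat z\,\alpha(t)\,\hat z'$ for a rule $(s,t)$ of $T$. By the closure argument just given (or directly), $\hat z, \hat z' \in \alpha(\Sigma)^\star$, so write $\hat z = \alpha(z)$ and $\hat z' = \alpha(z')$ for some $z, z' \in \Sigma^\star$ — here I use that $\alpha$ is injective on $\Sigma$, so $\alpha$ is injective on $\Sigma^\star$ and the preimages $z, z'$ are well-defined. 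Then $\alpha(x) = \alpha(z s z')$ and, by injectivity of $\alpha$, $x = z s z'$; similarly $y = z t z'$. Since $(s,t)$ is a rule of $T$, this gives $x \deriv{T} y$. With both hypotheses verified, Lemma~\ref{lem:imply-rev-alpha} yields $u \derivstar{T} v$, completing the converse and hence the proof.

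The main obstacle is the mild mismatch between the single-codeword formulation of comma-freeness in Definition~\ref{def:comma-free} and the fact that $\alpha(s)$ is generally a product of several codewords; handling this cleanly — factoring $\alpha(s)$ and applying comma-freeness twice, once on each end — is the only place requiring care. Everything else reduces to bookkeeping with the morphism property and injectivity, and the role of hypothesis $(iii)$ (no insertion rules) is precisely to guarantee $s \ne \mv$ so that comma-freeness has a nonempty codeword factor to act on; without it, an insertion rule $(\mv, t)$ could manufacture, from a word in $\alpha(\Sigma)^\star$, a word with a ``badly placed'' copy of $\alpha(t)$ not respecting the factorization, breaking closure.
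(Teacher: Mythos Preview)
Your proposal is correct and follows essentially the same route as the paper: the forward implication via Lemma~\ref{lem:deriv-morph}, the converse via Lemma~\ref{lem:imply-rev-alpha} with $\hatT \defeq \alpha(T)$, and the verification of closure of $\alpha(\Sigma)^\star$ under $\deriv{\alpha(T)}$ by exploiting comma-freeness together with $s \ne \mv$. Your explicit handling of the gap between Definition~\ref{def:comma-free} (stated for a single codeword $x \in X$) and the situation $\alpha(s) \in \alpha(\Sigma)^+$ is in fact more careful than the paper, which invokes comma-freeness on $\alpha(s)$ without further comment.
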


\begin{proof}
According to Lemma~\ref{lem:deriv-morph},
 $u \derivstar{T} v$ implies $\alpha(u) \derivstar{\alpha(T)} \alpha(v)$.
Conversely, let us prove that $\alpha(u) \derivstar{\alpha(T)} \alpha(v)$ implies $u \derivstar{T} v$. We rely on Lemma~\ref{lem:imply-rev-alpha}.

Let $\hat x$, $\hat y \in \Delta^\star$
 be such that $\hat x$ belongs to the range of $\alpha$ and $\hat x \deriv{\alpha(T)} \hat y$:
there exist $\hat z$, $\hat z' \in \Delta^\star$ and $(s, t) \in R$ such that 
$\alpha(x) = \hat z \alpha(s) \hat z'$ and 
$\hat y = \hat z \alpha(t) \hat z'$. 
Since $\alpha$ is a morphism, the range of $\alpha$ equals ${\alpha(\Sigma)}^\star$.
In particular, 
both $\alpha(s)$ and $\hat z \alpha(s) \hat z'$ belong to ${\alpha(\Sigma)}^\star$.
Furthermore, 
$\alpha(s)$ belongs to ${\alpha(\Sigma)}^+$: 
indeed,
$s$ is a non-empty word because $T$ has no insertion rule,
and thus its image under the injective morphism $\alpha$ is also  a non-empty word.
It follows that  both $\hat z$ and  $\hat z'$ belong to ${\alpha(\Sigma)}^\star$ because $\alpha(\Sigma)$ is a comma-free code: 
there exist $z$, $z' \in \Sigma^\star$ such that $\alpha(z) = \hat z$ and $\alpha(z') = \hat z'$.
We can now write $\hat x$ and $\hat y$ in the forms $\hat x  = \alpha(z s z')$ and  $\hat y = \alpha( z t z')$.
Hence, $\hat y$ belongs to the range of $\alpha$,
which proves that the range of $\alpha$ is closed under derivation modulo $\alpha(T)$.
Moreover, we also get $\alpha^{-1}(\hat x) = z s z' \deriv{T} z t z' = \alpha^{-1}(\hat y)$.
Therefore, Lemma~\ref{lem:imply-rev-alpha} applies with $\hatT \defeq \alpha(T)$.
\end{proof}

Let us thoroughly examine the hypotheses of Lemma~\ref{lemm:image-comma-free}.
Hypothesis~$(iii)$ could be replaced with ``\emph{$T$ has no deletion rule}'': apply Lemma~\ref{lemm:image-comma-free} in its original form to the reversal of $T$, 
defined as $\widetilde{T} \defeq \left(\Sigma, \{ (t, s) : (s, t) \in R \} \right)$.
However, the following two counterexamples show that neither hypothese~$(ii)$ nor hypothese~$(iii)$  is disposable.

\begin{cex} \label{cex:disprove-inj}
Let $T \defeq (  \ab, \{ (\tta, \tta \tta) \})$ and let $\alpha : \ab^\star \to \zeon^\star$ be the morphism defined by 
$\alpha(\tta) \defeq \ze \on $ 
and 
$\alpha(\ttb) \defeq \ze \on \on $: 
$\alpha(T) = (\zeon, \{ (\ze \on, \ze \on \ze \on) \})$.
Clearly, $\alpha$ is injective and $T$ is $\mv$-free.
However,  $\alpha(\ab)$ is not a comma-free code, and  
$\alpha(u)  \derivstar{\alpha(T)} \alpha(v)$
does not imply 
 $u \derivstar{T}  v$  for every $u$, $v \in \ab^\star$:
$\alpha(\ttb)  \deriv{\alpha(T)} \alpha(\tta \ttb)$ but ${\ttb} \nderivstar{T}  {\tta \ttb}$.
\end{cex}

Counterexample~\ref{cex:disprove-inj}
 disproves a claim from Claus's original paper \cite[page~$57$, line~$-4$]{Claus80}.
A statement from Harju, Karhum{\"a}ki and Krob \cite[page~$43$, line~$1$]{HarjuKK96} is disproved in the same way.

\begin{cex} \label{ex:disprove-comma}
Let 
$T \defeq \left( \{ \tta, \ttb, \ttc \}, \{ (\mv, \tta), (\ttb, \mv) \} \right)$
and 
let 
$\alpha :  {\{ \tta, \ttb, \ttc \}}^\star \to \zeon^\star$ 
be the morphism defined by 
$\alpha(\tta) \defeq \on \ze \on$, 
$\alpha(\ttb) \defeq \on \ze \ze \on$
and 
$\alpha(\ttc) \defeq \on \ze \ze \ze \on$: 
$\alpha(T) = \left(\zeon, \{ (\mv, \on \ze \on), (\on \ze \ze \on, \mv) \} \right)$.
Clearly, $\alpha$ is injective and $\alpha(\{ \tta, \ttb, \ttc \})$ is a comma-free code.
However, $T$ admits both insertion and deletion rules, 
and $\alpha(u)  \derivstar{\alpha(T)} \alpha(v)$
does not imply 
 $u \derivstar{T}  v$  for every $u$, $v \in \ab^\star$:
$\ttc \nderivstar{T} \tta$ 
but
$$
\alpha(\ttc)
\Deriv{\alpha(T)}
\on \ze \on \ze \on \ze \ze \on
\Deriv{\alpha(T)} 
\on \ze \on \ze \on \ze \ze \on \ze \on \on 
\Deriv{\alpha(T)}
\on \ze \on \ze  \ze \on \on
\Deriv{\alpha(T)} 
\alpha(\tta) \, .
$$
\end{cex}

\begin{proof}[Proof of Proposition~\ref{prop:Ck}]
We present a many-one reduction 
from \pbkaccess{k} on $\mv$-free instances 
to \pbkaccess{k} on $\calak$, 
so that Lemma~\ref{lem:semi-Thue-free} applies.

Let $(T, u, v)$ be an $\mv$-free instance of \pbkaccess{k}.
Let $\Sigma$ denote the alphabet of $T$.
Compute an injection $\alpha : \Sigma \to C$.
The morphism from $\Sigma^\star$ to $\zeon$  that extends $\alpha$ is also denoted $\alpha$.
Clearly, $(\alpha(T), \alpha(u), \alpha(v))$ belongs to $\calak$,
and $(\alpha(T), \alpha(u), \alpha(v))$ is computable from $(T, u, v)$.
Moreover, $u \derivstar{T} v$ is equivalent to $\alpha(u) \derivstar{\alpha(T)} \alpha(v)$ by Lemma~\ref{lemm:image-comma-free}.
\end{proof}

\section{From \pbGPCP{} to \pbaccess} \label{sec:Acc-GPCP}

The key ingredient of the proof of Fact~\ref{fact:GPCP-acc} is the \emph{accordion lemma} (Lemma~\ref{lem:accordion} below).

\begin{definition} 
A word $f$ is called \emph{bordered} if there exist three non-empty words $u$,  $v$ and $x$ such that $f = x u = vx$.
\end{definition} 

Equivalently,
$f$ is bordered if and only if there exists a word $z$  with $\lgr{z}  <  2\lgr{f}$ such that $f$ occurs twice or more in $z$.

\begin{definition} 
We say that two words $x$ and $y$ \emph{overlap} if at least one of the  following four assertions hold: 
\begin{enumerate}[$(i)$]
\item
$x$ occurs in $y$, 
\item 
$y$ occurs in $x$, 
\item 
 some non-empty prefix of $x$ is a suffix of $y$, or 
\item 
 some non-empty prefix of $y$ is a suffix of $x$.
 \end{enumerate} 
\end{definition} 

Since we make the convention that the empty word occurs in every word, 
the empty word and any other word do overlap.
Two non-empty words $x$ and $y$ overlap if and only if there exists a word $z$ with $\lgr{z} < \lgr{x} + \lgr{y}$  such that both $x$ and $y$ occur in $z$.

We can now  state a protoversion of the accordion lemma.

\begin{lemma} \label{lem:proto-accordion}
Let $T = (\Sigma, R)$ be a semi-Thue system, 
and 
let $f$, $u$, $v \in \Sigma^\star$  be  such that:
\begin{enumerate}[$(i)$]
\item $f$ is unbordered,
\item $f$ does not occur in $u$, 
\item $f$ does not occur in $v$, and
\item  for each rule $(s, t) \in R$, $s$ and $f$ do not overlap.
\end{enumerate} 
Then, $u \derivstar{T} v$
holds if and only if there exist $x$, $y \in {\Sigma}^\star$ satisfying both 
$x f  v = u f y$ 
and 
$x \derivstar{T} y$.
\end{lemma}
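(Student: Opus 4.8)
The statement is an "if and only if", so I would prove the two implications separately.

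For the (\emph{only if}) direction, suppose $u \derivstar{T} v$. The natural choice is $x \defeq u$ and $y \defeq v$. Then trivially $x \derivstar{T} y$ holds by hypothesis, and the word equation $x f v = u f y$ becomes $u f v = u f v$, which is an identity. So this direction is immediate and requires none of hypotheses $(i)$--$(iv)$; it is really just a sanity check that the characterization is not vacuous.

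The (\emph{if}) direction is the substantive one. Assume there exist $x$, $y \in \Sigma^\star$ with $x f v = u f y$ and $x \derivstar{T} y$. The plan is to first extract, purely from the word equation $x f v = u f y$ together with hypotheses $(i)$, $(ii)$, $(iii)$, the conclusion that $x = u z$ and $y = z v$ for some $z \in \Sigma^\star$ (equivalently, that the two displayed copies of $f$ line up). Here is the combinatorial core: look at the occurrence of $f$ on the left-hand side starting at position $|x|+1$ and the occurrence on the right-hand side starting at position $|u|+1$. If these two occurrences were at different positions in the common word $xfv = ufy$, they would either be disjoint or overlap. They cannot be disjoint-and-both-inside-a-single-$f$-free region, and more to the point: two occurrences of the same unbordered word $f$ in a common word $w$ with $|w| < 2|f|$ is impossible (that is exactly the equivalent formulation of unborderedness stated right after the definition), so distinct occurrences of $f$ must be at distance $\ge |f|$, i.e. non-overlapping as factors. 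Then I argue by cases on which of $|x|, |u|$ is larger. Suppose $|x| < |u|$ (the case $|x| > |u|$ is symmetric, swapping the roles via the equation). Then the left occurrence of $f$ at position $|x|+1$ sits entirely to the left of, or disjoint from, the right occurrence at position $|u|+1$; since it ends at position $|x|+|f| \le |u|$, the factor $f$ occurs inside $u$ — contradicting $(ii)$. A symmetric argument using $(iii)$ rules out $|x| > |u|$. Hence $|x| = |u|$, and comparing prefixes of length $|u|$ gives $x = u$, whence $f v = f y$ and so $v = y$; take $z \defeq \mv$. (One should double-check the degenerate sub-cases where one occurrence of $f$ lands partly in $u$ or $v$ and partly in the explicitly written $f$ — that is where hypothesis $(i)$, unborderedness, is used to force alignment rather than partial overlap.)

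Once we know $x = u z$ and $y = z v$, it remains to deduce $u \derivstar{T} v$ from $uz \derivstar{T} zv$. This is where hypothesis $(iv)$ enters. Take a derivation $uz = w_0 \deriv{T} w_1 \deriv{T} \dotsb \deriv{T} w_n = zv$. The idea is that the marker block $f$ — wait, there is no $f$ in $uz$; rather, the right tool is to track the "interface" between the $u$-part and the $v$-part. More carefully: I claim one can rewrite each $w_i$ in a form that exhibits a stable boundary. Actually the clean way is to go through the full accordion lemma's intended use, but at the level of this protoversion the argument is: since no rule's left-hand side $s$ overlaps $f$, and $f$ occurs in neither $u$ nor $v$, in the derivation $uf v \derivstar{T} \dots$ (obtained by padding) the occurrence of $f$ is never touched or destroyed by any rule application, so it persists through the whole derivation and splits it into a derivation on the prefix and a trivial one on the suffix; reading this off yields $u \derivstar{T} v$. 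I expect the main obstacle to be exactly this last step: making precise the claim that a rule application to a word of the form $p f q$, where $s$ does not overlap $f$, must occur entirely within $p$ or entirely within $q$ (so that $f$ is preserved and the two sides evolve independently) — this needs a short lemma about how an occurrence of the rewritten factor $s$ interacts with the fixed occurrence of $f$, using the non-overlap hypothesis $(iv)$ together with $f$ being absent from the surrounding material; I would isolate that as the key sub-claim and prove it by the same occurrence-position case analysis as above.
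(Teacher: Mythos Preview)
Your \emph{only if} direction is fine and matches the paper. The \emph{if} direction, however, has a genuine gap: the claim that the two displayed copies of $f$ must line up (i.e.\ $|x| = |u|$, hence $x = u$ and $y = v$) is simply false. Nothing in the hypotheses forbids $f$ from occurring inside $x$ or $y$. For a concrete picture, take any derivation $u = x_0 \deriv{T} x_1 \deriv{T} \cdots \deriv{T} x_n = v$ with $n \ge 2$ and set $x \defeq x_0 f x_1 f \cdots f x_{n-1}$, $y \defeq x_1 f \cdots f x_n$; then $xfv = ufy$ (both equal $x_0 f x_1 f \cdots f x_n$) and $x \derivstar{T} y$, yet $|x| > |u|$. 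Your case analysis breaks exactly here: when $|x| > |u|$ you invoke hypothesis~$(iii)$ ``by symmetry'', but the symmetric conclusion is only that $f$ occurs in $x$ (or in $y$), not in $v$, and that is no contradiction. The subsequent paragraph about deducing $u \derivstar{T} v$ from $uz \derivstar{T} zv$, and the padding to $ufv$, is then built on this false premise.

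The paper's argument embraces the multiple occurrences of $f$ rather than trying to rule them out. One factors the common word $xfv = ufy$ along \emph{all} occurrences of $f$ (pairwise non-overlapping by~$(i)$) as $w_0 f w_1 f \cdots f w_n$; hypotheses~$(ii)$ and~$(iii)$ then pin down $u = w_0$ and $v = w_n$, whence $x = w_0 f \cdots f w_{n-1}$ and $y = w_1 f \cdots f w_n$. Hypothesis~$(iv)$ is now used in the spirit of your final paragraph, but applied to the derivation $x \derivstar{T} y$ itself: since no left-hand side of a rule overlaps $f$, every rewriting step stays within a single $f$-delimited block, so the derivation decomposes into $w_{i-1} \derivstar{T} w_i$ for each $i$, and chaining these yields $u = w_0 \derivstar{T} \cdots \derivstar{T} w_n = v$. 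This ``accordion'' decomposition is the idea your attempt is missing.
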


\begin{proof}
\begin{trivlist}
\itemonlyif{}
If $u \derivstar{T} v$ then $x \defeq  u$ and $y \defeq v $ are such that
$x f  v  =  u f y$ 
and 
$x \derivstar{T} y$.
\itemif{} 
Assume that  there exist $x$, $y \in {\Sigma}^\star$ such that $x f  v  =  u f y$ and $x \derivstar{T} y$.
Let $n$ denote the number of occurrences of $f$ in $x f v$.
Since $f$ is unbordered (hypothesis~$(i)$), 
those occurrences  are pairwise non-overlapping: 
$$
x f v = u f y = w_0 f w_1 f w_2 \dotsm f w_n
$$  
for some words $w_0$, $w_1$, \ldots, $w_n \in \Sigma^\star$.
Since $f$ does not occur in $v$ (hypothesis~$(iii)$),
we have  $v = w_n$ and 
$$
x  =  w_0 f w_1 f w_2  \dotsm f w_{n - 1} \, .
$$
In the same way, $f$ does not occur in $u$ either (hypothesis~$(ii)$), and thus
we have also $u = w_0$ and 
$$
y  = w_1 f w_2 f  w_3  \dotsm f w_n \, .
$$
Now, remark that $f$ plays the role of a delimiter with respect to the derivation modulo $T$ (hypothesis~$(iv)$):
$x \derivstar{T} y$ 
implies 
$$
{w_{i - 1}} 
\Derivstar{T}
{w_i}
$$
for every $i \in \seg{1}{n}$.
 Therefore, $u \derivstar{T} v$ holds.
 \end{trivlist}
\end{proof}

Let us comment the statement of  Lemma~\ref{lem:proto-accordion}. 
Hypothesis~$(iv)$ implies that $T$ has no insertion rule.
It could be replaced with ``\emph{for every $(s, t) \in R$, $t$ and $f$ do not overlap}''. 
Hypothesis~$(ii)$ is in fact disposable: 
the verification is left to the reader.
Let $\Sigma$ be an alphabet, 
let $f$ be a symbol such that $f \notin \Sigma$,
 and let 
 $T$ be a semi-Thue system over $\Sigma \cup \{ f \}$ with rules in $\Sigma^+ \times \Sigma^\star$.
An easy consequence of Lemma~\ref{lem:proto-accordion} is that,
for every $u$, $v \in \Sigma^\star$, 
$u \derivstar{T} v$
holds if and only if there exist $x$, $y \in {(\Sigma \cup \{ f \})}^\star$ satisfying both 
$x f v = u f y$ 
and 
$x \derivstar{T} y$.

\begin{definition} \label{def:f}
The word $\ze \ze \on \on$ is denoted by $\delimf$.
\end{definition}

\begin{lemma}[Accordion lemma] \label{lem:accordion}
Let $k$ be a positive integer.
For every $(T, u, v) \in \calak$, 
$(T, u, v)$ is a yes-instance of \pbkaccess{k} 
if and only if
there exist $x$, $y \in {\zeon}^\star$ satisfying both 
$x \delimf v = u  \delimf y$ and  $x \derivstar{T} y$.
\end{lemma}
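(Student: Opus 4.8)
The plan is to derive the accordion lemma as a direct corollary of its protoversion, Lemma~\ref{lem:proto-accordion}, applied with the delimiter $\delimf = \ze\ze\on\on$ in the role of the unbordered word $f$. Fix $(T, u, v) \in \calak$; by Definition~\ref{def:Ck} this means $T = (\zeon, R)$ for some $k$-element set $R \subseteq C^+ \times C^+$ and $u, v \in C^\star$. Observe that, for $\Sigma = \zeon$ and $f = \delimf$, the equivalence in the conclusion of Lemma~\ref{lem:proto-accordion} is word-for-word the equivalence claimed here (a triple $(T, u, v)$ with $T$ a $k$-rule system is a yes-instance of \pbkaccess{k} exactly when $u \derivstar{T} v$). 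Hence the whole task is to verify hypotheses $(i)$--$(iv)$ of Lemma~\ref{lem:proto-accordion} for $f \defeq \delimf$ and for our $u$, $v$, $T$.

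Hypothesis $(i)$ is immediate: $\delimf$ has nonempty proper prefixes $\ze$, $\ze\ze$, $\ze\ze\on$ and nonempty proper suffixes $\on$, $\on\on$, $\ze\on\on$, two lists with empty intersection, so $\delimf$ is unbordered. For the remaining hypotheses I would use one elementary remark about the code $C$: writing an arbitrary word of $C^+$ as a concatenation $c_1 c_2 \dotsm c_m$ with each $c_i = \ze\on\ze^{n_i}\on\ze\on$ and $n_i \ge 2$, every occurrence of $\on$ that is not the last letter of the word is immediately followed by $\ze$ (inside each $c_i$ the first two $\on$'s are followed by $\ze$, and the last $\on$ of $c_i$ is followed by $\ze$ whenever $i < m$, namely by the first letter of $c_{i+1}$); in particular the factor $\on\on$ never occurs in a word of $C^+$.

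Because $\delimf$ contains $\on\on$, this remark shows at once that $\delimf$ occurs in no word of $C^+$, hence in neither $u$ nor $v$ (hypotheses $(ii)$ and $(iii)$) and in no left-hand side $s$ of a rule $(s, t) \in R$. To complete hypothesis $(iv)$ I would still rule out the other three ways in which $s$ and $\delimf$ could overlap. First, $s$ is not a factor of $\delimf$ since $\lgr{s} \ge \lgr{\ze\on\ze\ze\on\ze\on} = 7 > 4 = \lgr{\delimf}$. Second, no nonempty prefix of $s$ is a suffix of $\delimf$: the prefixes of $s$ of lengths $1,2,3,4$ are $\ze$, $\ze\on$, $\ze\on\ze$, $\ze\on\ze\ze$ (the last one because $n_1 \ge 2$), the suffixes of $\delimf$ of those lengths are $\on$, $\on\on$, $\ze\on\on$, $\ze\ze\on\on$, which never coincide, and $\delimf$ has no suffix of length exceeding $4$. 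Third, symmetrically, the suffixes of $s$ of lengths $1,2,3,4$ are $\on$, $\ze\on$, $\on\ze\on$, $\ze\on\ze\on$, the prefixes of $\delimf$ of those lengths are $\ze$, $\ze\ze$, $\ze\ze\on$, $\ze\ze\on\on$, which never coincide, and $\delimf$ has no longer prefix. Hence $s$ and $\delimf$ do not overlap, and hypothesis $(iv)$ holds.

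With hypotheses $(i)$--$(iv)$ checked, Lemma~\ref{lem:proto-accordion} yields exactly the asserted equivalence, finishing the proof. No step is genuinely hard; the only place calling for care is the bookkeeping in hypothesis $(iv)$, where one must read off correctly the short prefixes and suffixes of words of $C^+$ and of $\delimf$ and confirm that they are pairwise distinct.
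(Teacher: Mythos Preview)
Your proposal is correct and follows exactly the paper's approach: the paper's proof simply asserts that $\delimf$ is unbordered and that no $s \in C^+$ overlaps $\delimf$, then invokes Lemma~\ref{lem:proto-accordion}, leaving the routine verifications implicit; you have supplied precisely those verifications. One micro-point: you deduce hypotheses~$(ii)$ and~$(iii)$ from the fact that $\on\on$ avoids $C^+$, but $u,v$ lie in $C^\star$; the empty case is of course trivial since $\lgr{\delimf} > 0$, so nothing is missing.
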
 

\begin{proof}
Clearly, $\delimf$ is unbordered,
 and for every $s \in C^+$, $s$ and $\delimf$ do not overlap.
Hence, Lemma~\ref{lem:proto-accordion} applies.
\end{proof} 

The statement of the accordion lemma can be made precise as follows (the verification is left to the reader):
for any  $(T, u, v) \in \calak$ and any  $x$, $y \in \zeon^\star$ such that $x \delimf v = u  \delimf y$ and  $x \derivstar{T} y$, both $x$ and $y$ belong to ${(C \cup \{ \delimf \})}^\star$.
Besides, if $C$ and $\delimf$ were  defined as $C \defeq \left\{ \on  \ze^{n}   \on : n \ge 1 \right\}$ and $\delimf \defeq \on \on$ in Definitions~\ref{def:Ck} and~\ref{def:f}, then Proposition~\ref{prop:Ck} and Lemma~\ref{lem:accordion} would hold (the verification is left to the reader).
In \cite[Theorem~4.1]{HarjuKHandbook}, Harju and Karhum\"aki present a proof of Claus's theorem
that implicitly relies  on those variants of  Proposition~\ref{prop:Ck} and Lemma~\ref{lem:accordion}.

\begin{definition}
An instance  $(\Sigma, \sigma, \tau, \sone, \stwo, \tone, \ttwo)$ of \pbGPCP{} is called \emph{erasement-free} if $\sigma(\Sigma) \cup \tau(\Sigma) \subseteq \zeon^+$.
\end{definition}

We can now prove a slightly strengthened version of Fact~\ref{fact:GPCP-acc}.

\begin{theorem} \label{th:GPCP-thue}
Let $k$ be a positive integer.
If \pbkGPCP{k + 2} is decidable on  erasement-free instances  then \pbkaccess{k} is decidable.
\end{theorem}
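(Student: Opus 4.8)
The plan is to chain two reductions. By Proposition~\ref{prop:Ck}, \pbkaccess{k} is decidable as soon as its restriction to $\calak$ is, so it is enough to give a many-one reduction from \pbkaccess{k} restricted to $\calak$ to \pbkGPCP{k+2} on erasement-free instances; the point of passing through $\calak$ is that there the accordion lemma (Lemma~\ref{lem:accordion}) is available, and that is exactly what the reduction will exploit.

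So let $(T, u, v) \in \calak$, say $T = (\zeon, R)$ with $R = \{(s_1, t_1), \dotsc, (s_k, t_k)\}$ a $k$-element subset of $C^+ \times C^+$ and $u, v \in C^\star$. I would map $(T, u, v)$ to the \pbGPCP{} instance $(\Sigma, \sigma, \tau, \mv, \delimf v, u\delimf, \mv)$, where $\Sigma \defeq \zeon \cup \{r_1, \dotsc, r_k\}$ (the $r_i$ being $k$ new letters, one per rule of $T$, so that $\Sigma$ has cardinality $k+2$) and $\sigma, \tau : \Sigma^\star \to \zeon^\star$ are the morphisms that fix $\ze$ and $\on$ and satisfy $\sigma(r_i) = s_i$, $\tau(r_i) = t_i$ for every $i$. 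This instance is plainly computable from $(T, u, v)$, and it is erasement-free because $\sigma(\Sigma) = \zeon \cup \{s_1, \dotsc, s_k\}$ and $\tau(\Sigma) = \zeon \cup \{t_1, \dotsc, t_k\}$ are contained in $\zeon^+$. Unwinding the definition of \pbGPCP{}, the instance is feasible if and only if some $W \in \Sigma^\star$ satisfies $\sigma(W)\, \delimf v = u\, \delimf\, \tau(W)$, so it remains to show this condition is equivalent to $u \derivstar{T} v$.

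Suppose first that $u = p_0 \deriv{T} p_1 \deriv{T} \dotsb \deriv{T} p_m = v$; choose for each $j \in \seg{1}{m}$ words $z_j, z_j' \in \zeon^\star$ and an index $i_j$ with $p_{j-1} = z_j s_{i_j} z_j'$ and $p_j = z_j t_{i_j} z_j'$, and put $W \defeq (z_1 r_{i_1} z_1')\, \delimf\, (z_2 r_{i_2} z_2')\, \delimf \dotsm \delimf\, (z_m r_{i_m} z_m')$, where each $z_j$, $z_j'$ and each separator $\delimf$ is read as a word over $\Sigma$ (both $\sigma$ and $\tau$ fix $\delimf$). Then $\sigma(W) = p_0\, \delimf\, p_1\, \delimf \dotsm \delimf\, p_{m-1}$ and $\tau(W) = p_1\, \delimf\, p_2\, \delimf \dotsm \delimf\, p_m$, so, using $p_0 = u$ and $p_m = v$, both $\sigma(W)\,\delimf v$ and $u\,\delimf\,\tau(W)$ equal $u\, \delimf\, p_1\, \delimf \dotsm \delimf\, p_{m-1}\, \delimf\, v$, so $W$ is a solution (in the degenerate case $m = 0$, that is $u = v$, take $W \defeq u$ instead). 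Conversely, suppose $W$ is a solution, and put $x \defeq \sigma(W)$, $y \defeq \tau(W)$. Then $x \delimf v = u \delimf y$ holds by the feasibility equation, and $x \derivstar{T} y$ holds automatically: writing $W = w_0\, r_{i_1}\, w_1\, r_{i_2} \dotsm r_{i_p}\, w_p$ with $w_0, \dotsc, w_p \in \zeon^\star$ (splitting $W$ at its occurrences of rule letters), one passes from $\sigma(W) = w_0\, s_{i_1}\, w_1\, s_{i_2} \dotsm s_{i_p}\, w_p$ to $\tau(W) = w_0\, t_{i_1}\, w_1\, t_{i_2} \dotsm t_{i_p}\, w_p$ by rewriting those $p$ occurrences of left-hand sides into the corresponding right-hand sides one at a time, from left to right. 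Hence $(x, y)$ witnesses the right-hand side of Lemma~\ref{lem:accordion}, so by that lemma $(T, u, v)$ is a yes-instance of \pbkaccess{k}.

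Essentially all of the substance has already gone into Proposition~\ref{prop:Ck} and the accordion lemma, and I do not expect a genuine obstacle in the present step. The two points that need care are the placement of $\delimf$ on the correct side of $u$ and of $v$ among the four boundary words — so that the \pbGPCP{} feasibility equation coincides verbatim with the accordion equation $x\delimf v = u\delimf y$ — and the ``simultaneous rewriting'' identity $\sigma(W) \derivstar{T} \tau(W)$, which is what lets an arbitrary \pbGPCP{} solution be decoded into a bona fide $T$-derivation.
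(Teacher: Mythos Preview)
Your proposal is correct and follows essentially the same route as the paper: restrict to $\calak$ via Proposition~\ref{prop:Ck}, build the \pbkGPCP{k+2} instance $(\Sigma,\sigma,\tau,\mv,\delimf v,u\delimf,\mv)$ with one fresh letter per rule, and establish correctness by encoding a derivation as $W=(z_1 r_{i_1} z_1')\,\delimf\cdots\delimf\,(z_m r_{i_m} z_m')$ in the forward direction and invoking the accordion lemma on $x=\sigma(W)$, $y=\tau(W)$ (using $\sigma(W)\derivstar{T}\tau(W)$) in the backward direction. Your explicit treatment of the degenerate case $m=0$ by taking $W\defeq u$ is in fact slightly more careful than the paper's formulation, where the implicit choice $w=\mv$ would only give $\delimf v=u\delimf$, which need not hold for $u=v\in C^+$.
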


\begin{proof} 
In order to apply Proposition~\ref{prop:Ck}, 
we present a many-one reduction from \pbkaccess{k} on $\calak$ to \pbkGPCP{k + 2} on erasement-free instances. 

Let $(T, u, v)$ be an element of $\calak$:
there exist $s_1$, $s_ 2$, \ldots, $s_k$, $t_1$, $t_2$, \ldots, $t_k \in C^+$ such that
$T = \left(\zeon, \{ (s_1, t_1), (s_2, t_2), \dotsc, (s_k, t_k)  \} \right)
$. 

Let $a_1$, $a_2$, \ldots, $a_k$ be $k$ symbols such that
 $\Sigma \defeq  \{ \ze, \on, a_1, a_2, \dotsc, a_k \}$ is an alphabet of cardinality $k + 2$.
Let $\sigma$, $\tau : \Sigma^\star \to \zeon^\star$ be the morphisms defined by:
\begin{align*}
\sigma(\ze) & \defeq \ze \,, & \tau(\ze) & \defeq \ze \, , \\ 
\sigma(\on) & \defeq \on \,, & \tau(\on) & \defeq \on \, , \\ 
\sigma(a_i) & \defeq s_i \,, & \tau(a_i) & \defeq t_i  
\end{align*}
for every $i \in \seg{1}{k}$.
Let $J$ denote the instance $(\Sigma, \sigma, \tau, \mv, \delimf v, u \delimf , \mv)$ of \pbkGPCP{k + 2}.

It is clear that $J$ is erasement-free and that $J$ is  computable from $I$.
It remains to check that 
$I$ is a yes-instance of \pbkaccess{k}
if and only if 
$J$ is a yes-instance of \pbkGPCP{k + 2}.
The proof of the ``if part'' relies on the accordion lemma while the proof of the ``only if part'' relies on next lemma.

\begin{lemma} \label{lem:sigma-tau-xyz}
For any $x$, $y \in \zeon^\star$ such that $x \deriv{T} y$, 
there exists $z \in \Sigma^\star$ such that $x = \sigma(z)$ and $y = \tau(z)$. 
\end{lemma}

\begin{proof}
Let $z'$, $z'' \in \zeon^\star$ and let $i \in \seg{1}{k}$ be such that $x = z' s_i z''$ and $y = z' t_i z''$.
A suitable choice for $z$ is $z' a_i z''$.
\end{proof}

\begin{trivlist}
\itemonlyif{} 
Assume that $u \derivstar{T} v$.
There exist an integer $n \ge 0$ and $n + 1$ words $x_0$, $x_1$, \ldots, $x_n$ over $\zeon$ such that Equation~\eqref{eq:u-x-v} holds.
Lemma~\ref{lem:sigma-tau-xyz} ensures that  there exists $z_i \in \Sigma^\star$ satisfying  $x_{i - 1} = \sigma(z_i)$ and $x_i = \tau(z_i)$ for each $i \in \seg{1}{n}$.
Now, $w \defeq z_1 \delimf z_2 \delimf z_3 \dotsm \delimf z_{n}$ is such that 
$\sigma(w) \delimf v = u \delimf  \tau(w)$.
\itemif{} Assume that there exists $w \in \Sigma^\star$ such that $\sigma(w) \delimf v = u \delimf  \tau(w)$.
The morphisms  $\sigma$ and $\tau$ are defined in such a way that  ${\sigma(z)} \derivstar{T} {\tau(z)}$ for every $z \in \Sigma^\star$.
In particular, $x := \sigma(w)$ and $y := \tau(w)$ are such that  $x \delimf v = u\delimf y$ and $x \derivstar{T} y$.
Hence, Lemma~\ref{lem:accordion} yields  $u \derivstar{T} v$.
\end{trivlist}
\end{proof}

Combining Theorem~\ref{th:GPCP-thue} and \cite[Theorem~4.1]{MatiyasevichS05} we obtain:

\begin{corollary}
\label{coro:GPCPcinq}
\pbkGPCP{5} is undecidable on erasement-free instances.
\end{corollary}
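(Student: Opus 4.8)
The plan is to combine the two results already available at this point of the paper: Theorem~\ref{th:GPCP-thue}, which says that decidability of \pbkGPCP{k+2} on erasement-free instances implies decidability of \pbkaccess{k}, and the Matiyasevich--S\'enizergues result \cite[Theorem~4.1]{MatiyasevichS05}, which says that \pbkaccess{3} is undecidable. Instantiating $k = 3$ in Theorem~\ref{th:GPCP-thue} gives: if \pbkGPCP{5} is decidable on erasement-free instances, then \pbkaccess{3} is decidable. Contraposing and invoking the undecidability of \pbkaccess{3} yields that \pbkGPCP{5} is undecidable on erasement-free instances, which is exactly the claim.

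Concretely, I would write: suppose for contradiction that \pbkGPCP{5} is decidable on erasement-free instances. Apply Theorem~\ref{th:GPCP-thue} with $k \defeq 3$ (note $k + 2 = 5$) to conclude that \pbkaccess{3} is decidable. This contradicts \cite[Theorem~4.1]{MatiyasevichS05}. Hence \pbkGPCP{5} is undecidable on erasement-free instances. That is the entire argument — one substitution plus one contraposition.

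There is essentially no obstacle here; the corollary is a bookkeeping consequence of the section's main theorem, and the only thing to be careful about is the index arithmetic ($k = 3$ so that $k+2 = 5$) and the fact that Theorem~\ref{th:GPCP-thue} is already stated in the sharp ``erasement-free'' form, so the conclusion inherits that restriction without extra work. If anything, the mild subtlety worth a remark is that undecidability of the restricted (erasement-free) problem is formally \emph{stronger} than undecidability of the general \pbkGPCP{5}, so this corollary also reproves that \pbkGPCP{5} (unrestricted) is undecidable; but for the proof itself nothing beyond the displayed two-line deduction is needed.

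\begin{proof}[Proof of Corollary~\ref{coro:GPCPcinq}]
Assume, for the sake of contradiction, that \pbkGPCP{5} is decidable on erasement-free instances.
Applying Theorem~\ref{th:GPCP-thue} with $k \defeq 3$, so that $k + 2 = 5$, we deduce that \pbkaccess{3} is decidable.
This contradicts \cite[Theorem~4.1]{MatiyasevichS05}, which states that \pbkaccess{3} is undecidable.
Therefore \pbkGPCP{5} is undecidable on erasement-free instances.
\end{proof}
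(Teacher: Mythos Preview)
Your proof is correct and follows exactly the paper's own argument: the corollary is obtained by combining Theorem~\ref{th:GPCP-thue} (instantiated at $k=3$) with the Matiyasevich--S\'enizergues undecidability of \pbkaccess{3}. Nothing further is needed.
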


\section{From \pbPCP{} to \pbGPCP} \label{sec:PCP-GPCP}

\begin{definition}
An instance $(\Sigma, \sigma, \tau, \sone, \stwo, \tone, \ttwo)$  of \pbGPCP{} is called \emph{$(\mv, \mv)$-free} if for every $a \in \Sigma$, $(\sigma(a), \tau(a))  \ne (\mv, \mv)$.
\end{definition}

\begin{lemma} \label{lem:mv-mv-free}
For every integer $k \ge 1$, 
\pbkGPCP{k} is decidable if and only if the problem is decidable on $(\mv, \mv)$-free instances.
\end{lemma}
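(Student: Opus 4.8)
The statement is an ``if and only if'' whose non-trivial direction is the reduction from general \pbkGPCP{k} to \pbkGPCP{k} restricted to $(\mv, \mv)$-free instances; the converse is immediate since the restricted problem is a special case. So the plan is to describe a many-one reduction that, given an arbitrary instance $I = (\Sigma, \sigma, \tau, \sone, \stwo, \tone, \ttwo)$ of \pbkGPCP{k}, produces a $(\mv, \mv)$-free instance $I'$ on an alphabet of the same cardinality $k$ that has a solution if and only if $I$ does.

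\textbf{Key step: eliminating $(\mv, \mv)$-letters without changing the alphabet size.} Call $a \in \Sigma$ \emph{trivial} if $(\sigma(a), \tau(a)) = (\mv, \mv)$. Any occurrence of a trivial letter inside a candidate word $w$ may be deleted without affecting $\sone \sigma(w) \stwo$ or $\tone \tau(w) \ttwo$, so $I$ has a solution if and only if it has a solution using only non-trivial letters. If every letter of $\Sigma$ is trivial, then $I$ is a yes-instance exactly when $\sone \stwo = \tone \ttwo$ (taking $w = \mv$), which is decidable directly; in that case output a fixed trivially-yes or trivially-no $(\mv,\mv)$-free instance of \pbkGPCP{k}. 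Otherwise, pick a non-trivial letter $b \in \Sigma$ and define new morphisms $\sigma'$, $\tau'$ that agree with $\sigma$, $\tau$ on every non-trivial letter and send each trivial letter $a$ to $(\sigma(b), \tau(b))$ as well (i.e.\ $\sigma'(a) \defeq \sigma(b)$, $\tau'(a) \defeq \tau(b)$). Since $b$ is non-trivial, $(\sigma'(a), \tau'(a)) = (\sigma(b), \tau(b)) \ne (\mv, \mv)$ for trivial $a$, and the non-trivial letters were already $(\mv,\mv)$-free, so $I' \defeq (\Sigma, \sigma', \tau', \sone, \stwo, \tone, \ttwo)$ is $(\mv, \mv)$-free and uses the same alphabet $\Sigma$ of cardinality $k$.

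\textbf{Correctness.} One shows $I$ is a yes-instance if and only if $I'$ is. For the forward direction, a solution $w$ of $I$ can be assumed to contain no trivial letters (delete them as above), and on such $w$ we have $\sigma'(w) = \sigma(w)$ and $\tau'(w) = \tau(w)$, so $w$ solves $I'$. For the backward direction, suppose $w$ solves $I'$; replace every trivial letter occurring in $w$ by the single letter $b$ to obtain $w''$. Because $\sigma'$ sends trivial letters to $\sigma(b)$ we get $\sigma'(w'') = \sigma'(w)$ and likewise $\tau'(w'') = \tau'(w)$, so $w''$ also solves $I'$; but $w''$ contains only non-trivial letters, on which $\sigma'$ and $\tau'$ coincide with $\sigma$ and $\tau$, hence $w''$ solves $I$. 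Since $I'$ is clearly computable from $I$ (deciding triviality of a letter and whether $\sone\stwo = \tone\ttwo$ are effective), this gives the many-one reduction, and combined with the trivial converse inclusion it yields the equivalence of decidability.

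\textbf{Main obstacle.} There is no serious obstacle; the only point requiring a little care is the degenerate case where \emph{all} letters are trivial, where one cannot pick a ``witness'' letter $b$ and must instead dispatch to a direct decision of $\sone\stwo \overset{?}{=} \tone\ttwo$ and emit a canonical yes- or no-instance. Everything else is a routine substitution argument of exactly the flavour already used in the proof of Lemma~\ref{lem:semi-Thue-free}.
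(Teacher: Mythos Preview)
Your proposal is correct and follows essentially the same idea as the paper's proof: strip out the letters $a$ with $(\sigma(a),\tau(a))=(\mv,\mv)$, handle the all-trivial case by testing $\sone\stwo=\tone\ttwo$ directly, and observe that solutions correspond. The only implementation difference is that the paper simply restricts to the sub-alphabet $\widehat{\Sigma}$ of non-trivial letters (and is slightly casual about the resulting instance possibly having fewer than $k$ letters), whereas you keep $\Sigma$ intact by redefining every trivial letter to duplicate a chosen non-trivial letter $b$; your variant is thus a touch more careful about preserving the exact alphabet cardinality $k$, but the underlying argument is the same.
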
 

\begin{proof}
We present a many-one reduction from \pbkGPCP{k} to \pbkGPCP{k} on $(\mv, \mv)$-free instances.

Let $I \defeq (\Sigma, \sigma, \tau, \sone, \stwo, \tone, \ttwo)$ be an instance of \pbkGPCP{k}.
Compute the set $\hatSigma$ of all letters $a \in \Sigma$ such that $(\sigma(a), \tau(a))  \ne (\mv, \mv)$.
If $\hatSigma$ is empty then solving  \pbkGPCP{k} on $I$ reduces to checking whether $\sone \stwo$ and $\tone \ttwo$
are equal.
Hence, we may assume   $\hatSigma \ne \emptyset$ without loss of generality, 
taking out of the way cumbersome considerations.
Let $\hatsigma$ and $\hattau$ denote the restrictions to $\hatSigma^\star$ of $\sigma$  and $\tau$, respectively.
Let $J$ denote the septuple $(\hatSigma,  \hatsigma,  \hattau, \sone, \stwo, \tone, \ttwo)$.
Clearly, $J$ is an $(\mv, \mv)$-free instance of \pbkGPCP{k} and $J$ is computable from $I$.
Moreover,  $I$ is a yes-instance of \pbkGPCP{k} if and only if $J$ is also a yes-instance of the problem.
\end{proof}

Remark that every erasement-free instance of \pbGPCP{} is $(\mv, \mv)$-free, but the converse is false in general.

\begin{definition} 
An instance $(\Sigma, \sigma, \tau)$ of \pbPCP{} is called \emph{erasement-free} if $\sigma(\Sigma) \cup \tau(\Sigma) \subseteq \zeon^+$.
\end{definition}

We can now prove Fact~\ref{fact:PCP-GPCP}.

\begin{theorem}
\label{th:PCP-GPCP}
Let $k$ be a positive integer.
\begin{enumerate}[$(i)$.]
\item If \pbkPCP{k + 2} is decidable then  \pbkGPCP{k} is decidable.
\item 
 If \pbkPCP{k + 2} is decidable on erasement-free instances then  \pbkGPCP{k} is decidable on erasement-free instances.
 \end{enumerate}
\end{theorem}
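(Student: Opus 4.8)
The plan is to construct a many-one reduction from \pbkGPCP{k} to \pbkPCP{k+2}, and to observe that the same construction preserves the erasement-free property, which establishes both $(i)$ and $(ii)$ at once. By Lemma~\ref{lem:mv-mv-free}, for part~$(i)$ it suffices to reduce \pbkGPCP{k} on $(\mv, \mv)$-free instances, so assume the input instance $(\Sigma, \sigma, \tau, \sone, \stwo, \tone, \ttwo)$ has the property that $(\sigma(a), \tau(a)) \ne (\mv, \mv)$ for every $a \in \Sigma$ (for part~$(ii)$ this holds automatically, since erasement-free implies $(\mv,\mv)$-free). The standard trick is to introduce two fresh letters, say $\ttb$ and $\tte$, to mark the beginning and the end of a candidate solution word, and to use the gadget morphisms $\lambda_d$ and $\rho_d$ from Definition~\ref{def:lambda-rho} together with a fresh delimiter letter $d$ to ``spread out'' the images of $\sigma$ and $\tau$ so that a length/alignment argument forces the reading to start and stop in the right places.

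Concretely, I would work over $\zeontw = \zeon \cup \{\tw\}$ as the target alphabet (so the new \pbPCP{} instance has images over a ternary alphabet, but this is harmless — one can afterwards recode $\zeontw^\star$ injectively into $\zeon^\star$ by a fixed block code, which preserves erasement-freeness). Set $\Sigma' \defeq \Sigma \cup \{\ttb, \tte\}$, an alphabet of cardinality $k+2$. Define morphisms $\sigma', \tau' : {\Sigma'}^\star \to \zeontw^\star$ by $\sigma'(a) \defeq \tw\,\sigma(a)$ and $\tau'(a) \defeq \tau(a)\,\tw$ for $a \in \Sigma$ — i.e.\ essentially $\lambda_\tw$ on the $\sigma$-side and $\rho_\tw$ on the $\tau$-side — and then let $\ttb$ carry the ``left context'' $\sone$ versus $\tone$ and $\tte$ carry the ``right context'' $\stwo$ versus $\ttwo$, with the $\tw$-padding arranged so that in any equality $\sigma'(W) = \tau'(W)$ the letter $\ttb$ must occur exactly once and at the very front of $W$, and $\tte$ exactly once at the very back. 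The point of the interleaved $\tw$'s is the classical observation that $\tw \sigma(a_1) \tw \sigma(a_2) \cdots$ and $\tau(a_1) \tw \tau(a_2) \tw \cdots$ can only be made equal (after absorbing the boundary words) if the two sequences of $\Sigma$-letters coincide, so that $W = \ttb\, w\, \tte$ with $\sone \sigma(w) \stwo = \tone \tau(w) \ttwo$; conversely any solution $w \in \Sigma^\star$ of the \pbGPCP{} instance yields the solution $W = \ttb w \tte \in {\Sigma'}^+$ of the \pbPCP{} instance. One must choose the images $\sigma'(\ttb), \tau'(\ttb), \sigma'(\tte), \tau'(\tte)$ (incorporating $\sone, \tone, \stwo, \ttwo$ and suitable $\tw$-markers) so that these are the \emph{only} solutions — in particular so that no solution can avoid $\ttb$ or $\tte$, which is where the $(\mv,\mv)$-freeness of the source instance gets used (it guarantees each $\Sigma$-letter contributes nontrivially, so one cannot cheat by pumping erasing letters).

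The main obstacle, and the place where care is required, is the forward direction of correctness: showing that \emph{every} solution $W \in {\Sigma'}^+$ of the constructed \pbPCP{} instance has the form $\ttb w \tte$ with $w \in \Sigma^\star$. This is the usual ``no spurious solutions'' argument for such reductions, and it is carried out by a prefix/length bookkeeping on the $\tw$-delimiters: since $\tw$ never occurs inside any $\sigma(a)$ or $\tau(a)$ (for $a \in \Sigma$) except as the padding we inserted, the positions of $\tw$ in $\sigma'(W)$ and in $\tau'(W)$ are forced to interleave in a unique way, which pins down the first and last letters of $W$ and then, inductively, the entire letter sequence. I would handle this by a lemma analogous to the ``accordion''-style reasoning already used: $\tw$ acts as an unbordered delimiter not overlapping any code word. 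Finally, for part~$(ii)$, observe that if the source instance is erasement-free then every $\sigma'(a), \tau'(a)$ we have written down is a nonempty word over $\zeontw$ — the $\tw$-padding alone guarantees this even for $\ttb$ and $\tte$ — so after the fixed block recoding $\zeontw^\star \hookrightarrow \zeon^\star$ the target \pbPCP{} instance is erasement-free as well; thus the identical reduction witnesses $(ii)$.
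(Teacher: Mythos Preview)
Your overall architecture matches the paper's exactly: reduce from $(\mv,\mv)$-free \pbkGPCP{k} (via Lemma~\ref{lem:mv-mv-free}) to \pbkPCP{k+2} by adjoining fresh letters $\ttb$, $\tte$ carrying the boundary words, desynchronizing with $\lambda_\tw$, $\rho_\tw$, and then recoding the enlarged target alphabet into $\zeon$ by a fixed injective morphism; the same construction witnesses both $(i)$ and $(ii)$.

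There is, however, a slip in your explicit construction that would derail the hard direction. You set $\sigma'(a) \defeq \tw\,\sigma(a)$ and $\tau'(a) \defeq \tau(a)\,\tw$ and gloss this as ``essentially $\lambda_\tw$ on the $\sigma$-side and $\rho_\tw$ on the $\tau$-side'', but these are not the same thing: the paper takes $\hatsigma(a) \defeq \lambda_\tw(\sigma(a))$ and $\hattau(a) \defeq \rho_\tw(\tau(a))$, inserting a $\tw$ adjacent to \emph{every} letter of $\sigma(a)$ and $\tau(a)$, not a single $\tw$ at one end. The difference bites precisely on $(\mv,\mv)$-free but not erasement-free instances: if $\sigma(a) = \mv$ then your $\sigma'(a) = \tw$ ends in $\tw$, and if $\tau(a) = \mv$ then your $\tau'(a) = \tw$ begins with $\tw$, so the first-letter/last-letter dichotomy (Claims~\ref{claim:hatsigma} and~\ref{claim:hattau} in the paper) that forces any solution $W$ to begin with $\ttb$ and end with $\tte$ breaks down. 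With the paper's definition such letters simply contribute $\mv$ to $\hatsigma$ or $\hattau$, and the claims go through untouched. Finally, the paper does not finish by tracking $\tw$-positions as you outline: once $W = \ttb W' \tte$ is established, it takes a \emph{shortest} solution and uses occurrence counts of $\tte$ and $\ttb$ (Lemmas~\ref{lem:prefix-we} and~\ref{lem:suffix-bw}) to show that an internal $\tte$ or $\ttb$ in $W'$ would yield a strictly shorter solution; the desynchronizer $\tw$ plays no further role at that stage.
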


\begin{proof}
We present a many-one reduction from \pbkGPCP{k} on $(\mv, \mv)$-free instances to \pbkPCP{k + 2} in order to apply Lemma~\ref{lem:mv-mv-free}.

Let $I \defeq (\Sigma, \sigma, \tau, \sone, \stwo, \tone, \ttwo)$ be an $(\mv, \mv)$-free
instance of \pbkGPCP{k}.
Without loss of generality, we may assume $\ttb \notin \Sigma$ and $\tte \notin \Sigma$: 
$\hatSigma \defeq \Sigma \cup \{ \ttb, \tte \}$ is an alphabet of cardinality $k + 2$.
Let  $\lambda \defeq \lambda_\tw$ and $\rho \defeq \rho_\tw$ (see Definition~\ref{def:lambda-rho}).
Let $\hatsigma$, $\hattau : {\hatSigma}^\star \to {\{ \ze, \on, \tw, \ttb, \tte \}}^\star$ be the two morphisms defined by:
\begin{align*}
\hatsigma(\ttb) & \defeq \ttb \lambda(\sone) \,, &
\hattau(\ttb) & \defeq \ttb \tw \rho(\tone)\, , \\ 
\hatsigma (\tte) & \defeq \lambda(\stwo) \tw \tte \, , &
\hattau(\tte) & \defeq  \rho(\ttwo) \tte \, ,  \\
\hatsigma (a) & \defeq \lambda(\sigma(a))  \,, & 
\hattau(a) & \defeq  \rho(\tau(a)) 
\end{align*}
for every $a \in \Sigma$.
Let $\jmath : {\{ \ze, \on, \tw, \ttb, \tte \}}^\star \to \zeon^\star$ denote an  injective morphism: 
for instance  $\jmath$ can be given by 
$\jmath( \ze) \defeq \ze \ze \ze$, 
$\jmath( \on) \defeq \on \on \on$, 
$\jmath( \tw) \defeq \on \ze \on$,
$\jmath( \ttb) \defeq \on \ze \ze$ and 
$\jmath( \tte) \defeq \ze \ze \on$. 

It is clear that 
$J \defeq 
(\hatSigma, 
\jmath \circ \hatsigma, 
\jmath \circ \hattau)$ is an instance of \pbkPCP{k + 2} computable from $I$,
 and that $J$ is erasement-free whenever  $I$ is erasement-free.
Hence, to prove both points~$(i)$ and~$(ii)$ of Theorem~\ref{th:PCP-GPCP}, 
it remains to check that
$I$ is a yes-instance of   \pbkGPCP{k}
if and only if  
$J$ is a yes-instance of   \pbkPCP{k + 2}.

\begin{lemma} \label{lem:w-bwe}
For every $w \in \Sigma^\star$, 
$\sone \sigma(w) \stwo = \tone \tau(w) \ttwo$ if and only if 
$\hatsigma(\ttb w \tte) = \hattau(\ttb w \tte)$.
\end{lemma}

\begin{proof}
Straightforward computations yield
$$
\hatsigma(\ttb w \tte)
= 
\hatsigma(\ttb) \hatsigma( w ) \hatsigma(\tte)
= 
\ttb \lambda(\sone) \lambda(\sigma(w)) \lambda(\stwo) \tw \tte
= 
\ttb \lambda(\sone \sigma(w) \stwo) \tw \tte
$$
and 
$$
\hattau(\ttb w \tte)
= 
\hattau(\ttb) \hattau( w ) \hattau(\tte)
= 
\ttb \tw \rho(\tone) \rho(\tau(w)) \rho(\ttwo) \tw \tte
= \ttb \tw \rho(\tone \tau(w) \ttwo)  \tte \, .
$$
Since $\lambda(x) \tw = \tw \rho(x)$ for every $x \in \zeon^\star$, 
$\sone \sigma(w) \stwo = \tone \tau(w) \ttwo$ implies $\hatsigma(\ttb w \tte) = \hattau(\ttb w \tte)$,
and furthermore, 
$\hatsigma(\ttb w \tte) = \hattau(\ttb w \tte)$ implies $\lambda(\sone \sigma(w) \stwo) =  
\lambda(\tone \tau(w) \ttwo)$.
Since $\lambda$ is trivially injective, 
$\hatsigma(\ttb w \tte) = \hattau(\ttb w \tte)$ implies $\sone \sigma(w) \stwo = \tone \tau(w) \ttwo$.
\end{proof}

 If $I$ is a yes instance of \pbkGPCP{k} then it follows from Lemma~\ref{lem:w-bwe} that $J$ is a yes-instance of \pbkPCP{k + 2}.
The converse is slightly more complicated to prove.

\begin{lemma} \label{lem:prefix-we}
For every  $w \in {\hatSigma}^\star$, the following three assertions are equivalent:
\begin{enumerate}
\item $\hatsigma( w \tte)$ is a prefix of  $\hattau( w \tte)$,
\item $\hattau( w \tte)$ is a prefix of $\hatsigma( w \tte)$, and 
\item $\hatsigma( w \tte) = \hattau( w \tte)$.
\end{enumerate}
\end{lemma}

\begin{proof}
The letter $\tte$ occurs once in $\hatsigma( \tte)$ 
(resp.~$\hattau( \tte)$) 
whereas for every $a \in \Sigma \cup \{ \ttb \}$,
$\tte$ does not occur at all in $\hatsigma(a)$ (resp.~$\hattau( a)$).
Therefore,  $\nocc{\tte}{\hatsigma( x)} = \nocc{\tte}{x}  = \nocc{\tte}{\hattau(x)}$ holds for every $x \in {\hatSigma}^\star$. 
Since $\tte$ is  the last letter of $\hatsigma( \tte)$,
any proper prefix of $\hatsigma( w \tte)$  contains less occurrences of $\tte$ than $\hattau(w\tte)$.
 From that we deduce that $\hattau( w \tte)$ 
cannot be a proper prefix of  $\hatsigma( w \tte)$.
In the same way,  $\hatsigma( w \tte)$ cannot be a proper prefix of $\hattau( w \tte)$.
\end{proof} 

\begin{lemma} \label{lem:suffix-bw}
For every  $w \in {\hatSigma}^\star$, the following three assertions are equivalent:
\begin{enumerate}
\item $\hatsigma(\ttb w)$ is a suffix of  $\hattau(\ttb  w )$, 
\item $\hattau(\ttb w )$ is a suffix of $\hatsigma(\ttb w )$, and 
\item $\hatsigma( \ttb w) = \hattau(\ttb  w )$. 
\end{enumerate}
\end{lemma}

\begin{proof}
Lemma~\ref{lem:suffix-bw} is proved in the same way as Lemma~\ref{lem:prefix-we}.
The details are left to the reader.
\end{proof} 

\begin{claim} \label{claim:hatsigma}
Let $a \in \hatSigma$ be such that $\hatsigma(a) \ne \mv$.
\begin{enumerate}[$(i)$.]
\item The first letter of $\hatsigma(a)$ is either $\ttb$ or $\tw$.
\item The last letter of  $\hatsigma(a)$ is distinct from $\tw$. 
\end{enumerate}
\end{claim}

\begin{claim} \label{claim:hattau}
Let $a \in \hatSigma$ be such that $\hattau(a) \ne \mv$.
\begin{enumerate}[$(i)$.]
\item The first letter of $\hattau(a)$ is distinct from $\tw$.
\item The last letter of  $\hattau(a)$ is either $\tw$ or $\tte$.
\end{enumerate}
\end{claim}

Assume that $J$ is a yes-instance of \pbkPCP{k + 2}.
Let $w \in {\hatSigma}^+$ be such that $\hatsigma(w) = \hattau( w )$.
Let $x$ denote both words $\hatsigma(w)$ and $\hattau( w )$. 

Since $I$ is an $(\mv, \mv)$-free instance of \pbGPCP, $(\hatsigma(a), \hattau(a))$ is distinct from $(\mv,  \mv)$ for every $a \in \hatSigma$, 
and thus $x$ is a non-empty word.
Combining Claims~\ref{claim:hatsigma}$(i)$ and~\ref{claim:hattau}$(i)$,
we obtain that $\ttb$ is the first letter of $x$, 
and thus $\ttb$ is also the first letter of $w$.
In the same way, combining Claims~\ref{claim:hatsigma}$(ii)$ and~\ref{claim:hattau}$(ii)$, 
we obtain that $\tte$ is the last letter of $x$, 
and thus $\tte$ is also the last  letter of $w$.
Hence, $w$ is of the form $\ttb w' \tte$ with $w' \in {\hatSigma}^\star$.

Now, assume that $w$ is a \emph{shortest} non-empty word over $\hatSigma$ such that $\hatsigma(w) = \hattau( w)$. 
Let us check that $w' \in \Sigma^\star$.
By the way of contradiction suppose that $\tte$ occurs in $w'$: 
there exist $w_1$, $w_2 \in {\hatSigma}^\star$ such that $w' = w_1 \tte w_2$. 
Straightforward computations yield $
\hatsigma(\ttb w_1 \tte) \hatsigma(w_2 \tte)  = 
x = \hattau( \ttb w_1 \tte)  \hattau(w_2 \tte) 
$. 
Therefore,
 $\hatsigma( \ttb w_1 \tte)$ is a prefix of $\hattau(\ttb w_1 \tte)$
 or 
$\hattau( \ttb w_1 \tte)$ is a prefix of $\hatsigma(\ttb w_1 \tte)$.
From Lemma~\ref{lem:prefix-we}, we deduce that $\hatsigma( \ttb w_1 \tte) = \hattau(\ttb w_1 \tte)$.
Since $\ttb w_1 \tte$ is shorter than $w$, a contradiction follows.
Hence $\tte$ does not occur in $w'$. 
Similar arguments based on Lemma~\ref{lem:suffix-bw}  show that $\ttb$ does not occur in $w'$ either.

Hence,  $w'$ is a word over $\Sigma$,
 and thus Lemma~\ref{lem:w-bwe} ensures that $\sone \sigma(w') \stwo = \tone \tau(w') \ttwo$.
It follows that $I$ is a yes-instance of \pbkGPCP{k}. 
\end{proof} 

Strictly speaking, the correspondence problem that was originally introduced by Post in his 1946 paper \cite{Post46PCP} is, in our terminology, the restriction of \pbPCP{} to erasement-free instances.

Combining Theorems~\ref{th:GPCP-thue} and~\ref{th:PCP-GPCP}$(ii)$, 
we obtain a slightly strengthened version of Claus's theorem (Fact~\ref{fact:PCP-acc}).

\begin{corollary}
\label{coro:Claus}
Let $k$ be a positive integer.
If \pbkPCP{k + 4} is decidable on  erasement-free instances  then \pbkaccess{k} is decidable.
\end{corollary}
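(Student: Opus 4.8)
The plan is to obtain Corollary~\ref{coro:Claus} simply by chaining the two main results already established, so that essentially no new work is required. Concretely, I would argue the contrapositive in two stages. Suppose \pbkPCP{$k + 4$} is decidable on erasement-free instances. Applying Theorem~\ref{th:PCP-GPCP}$(ii)$ with the parameter $k$ replaced by $k + 2$ (so that $k + 2 + 2 = k + 4$), the hypothesis ``\pbkPCP{$(k+2)+2$} is decidable on erasement-free instances'' yields that \pbkGPCP{$k + 2$} is decidable on erasement-free instances. Then applying Theorem~\ref{th:GPCP-thue} with its parameter equal to $k$, the hypothesis ``\pbkGPCP{$k + 2$} is decidable on erasement-free instances'' gives that \pbkaccess{$k$} is decidable. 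This is exactly the conclusion sought.

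The only point that requires a moment's care is the bookkeeping of the alphabet-size parameters: Theorem~\ref{th:PCP-GPCP}$(ii)$ is a statement of the shape ``\pbPCP{} on $m+2$ letters $\Rightarrow$ \pbGPCP{} on $m$ letters'' (erasement-free on both sides), and one must instantiate it at $m = k+2$ rather than at $m = k$; likewise Theorem~\ref{th:GPCP-thue} is ``\pbGPCP{} on $m+2$ letters, erasement-free $\Rightarrow$ \pbaccess{} with $m$ rules'', instantiated at $m = k$. The two intermediate statements then match verbatim: Theorem~\ref{th:PCP-GPCP}$(ii)$ produces ``\pbkGPCP{$k+2$} decidable on erasement-free instances'', which is precisely the hypothesis of Theorem~\ref{th:GPCP-thue}. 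I would also note in passing that the erasement-free qualifier is genuinely needed to make the composition work: Theorem~\ref{th:GPCP-thue} only asserts what it asserts for erasement-free \pbGPCP{}, and Theorem~\ref{th:PCP-GPCP}$(ii)$ is exactly the version of Fact~\ref{fact:PCP-GPCP} that carries the erasement-free property through the reduction, which is why the corollary can be strengthened to the erasement-free setting.

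There is essentially no main obstacle here — the substantive content was already discharged in Sections~\ref{sec:Acc-GPCP} and~\ref{sec:PCP-GPCP} — so the ``hard part'' is merely writing the index substitution cleanly enough that the reader sees immediately that $(k+4) = (k+2)+2$ and that the two theorems compose. For completeness I would phrase the final proof in the contrapositive-free direction as well if desired, but the cleanest presentation is: assume \pbkPCP{$k+4$} decidable on erasement-free instances; by Theorem~\ref{th:PCP-GPCP}$(ii)$ (with $k \leftarrow k+2$), \pbkGPCP{$k+2$} is decidable on erasement-free instances; by Theorem~\ref{th:GPCP-thue}, \pbkaccess{$k$} is decidable; done. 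I would keep the proof to three or four lines.
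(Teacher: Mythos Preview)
Your proposal is correct and is exactly the approach the paper takes: the corollary is obtained by composing Theorem~\ref{th:PCP-GPCP}$(ii)$ (instantiated at $k+2$) with Theorem~\ref{th:GPCP-thue}, precisely as you describe. The paper's own justification is a single sentence to this effect, so your three-line write-up is entirely appropriate.
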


Combining  Corollary~\ref{coro:Claus} and \cite[Theorem~4.1]{MatiyasevichS05} we obtain:

\begin{corollary}
\pbkPCP{7} is undecidable on erasement-free instances.
\end{corollary}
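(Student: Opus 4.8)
The plan is to instantiate Corollary~\ref{coro:Claus} at $k \defeq 3$ and feed it the undecidability result of Matiyasevich and S\'enizergues as a black box. First I would recall that \cite[Theorem~4.1]{MatiyasevichS05} asserts that \pbkaccess{3} is undecidable. Next, I would write out Corollary~\ref{coro:Claus} with $k \defeq 3$: \emph{if \pbkPCP{7} is decidable on erasement-free instances, then \pbkaccess{3} is decidable.} Taking the contrapositive and combining it with the undecidability of \pbkaccess{3} yields immediately that \pbkPCP{7} is undecidable on erasement-free instances. Since the erasement-free restriction is a subproblem of the general \pbkPCP{7}, this is (a fortiori) also a proof that \pbkPCP{7} is undecidable.

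There is essentially no obstacle here: all the real work has been carried out in Sections~\ref{sec:access}, \ref{sec:Acc-GPCP} and~\ref{sec:PCP-GPCP}, which together establish Corollary~\ref{coro:Claus}, and the undecidability of \pbkaccess{3} is imported verbatim from \cite{MatiyasevichS05}. The only point worth flagging is that the chain of reductions behind Corollary~\ref{coro:Claus} was deliberately threaded through the \emph{erasement-free} restriction at each stage --- Theorem~\ref{th:GPCP-thue} produces erasement-free \pbGPCP{} instances, and Theorem~\ref{th:PCP-GPCP}$(ii)$ preserves erasement-freeness when passing from \pbGPCP{} to \pbPCP{} --- so that the sharper conclusion about the erasement-free version of \pbkPCP{7} is licit and not merely the classical statement about \pbkPCP{7}. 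No new ingredient is needed beyond citing the two results and performing the contrapositive.
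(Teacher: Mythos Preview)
Your proposal is correct and matches the paper's own argument exactly: the paper simply states that the corollary follows by combining Corollary~\ref{coro:Claus} with \cite[Theorem~4.1]{MatiyasevichS05}, which is precisely the contrapositive-at-$k=3$ step you describe. No additional ingredient is needed.
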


\bibliography{gpcp}
\bibliographystyle{plain}

\end{document}